\DeclareMathOperator*{\argmin}{arg\,min}
\DeclarePairedDelimiter\floor{\lfloor}{\rfloor}
\titleformat*{\section}{\Large\bfseries}
\titleformat*{\subsection}{\large\sc}
\titleformat*{\subsubsection}{\itshape}
\patchcmd{\epigraph}{\@epitext{#1}}{\itshape\@epitext{#1}}{}{}
\begin{document}

\title{{\bf On algorithmically boosting fixed-point computations}}

\author{\large{Ioannis Avramopoulos}\thanks{RelationalAI, Inc.} \and \large{Nikolaos Vasiloglou}\footnotemark[1]}

\date{}

\maketitle

\thispagestyle{empty} 

\newtheorem{definition}{Definition}
\newtheorem{proposition}{Proposition}
\newtheorem{theorem}{Theorem}
\newtheorem*{theorem*}{Theorem}
\newtheorem{corollary}{Corollary}
\newtheorem{lemma}{Lemma}
\newtheorem{axiom}{Axiom}
\newtheorem{thesis}{Thesis}

\vspace*{-0.2truecm}

\begin{abstract}
The main topic of this paper are algorithms for computing Nash equilibria. We cast our particular methods as instances of a general algorithmic abstraction, namely, a method we call {\em algorithmic boosting}, which is also relevant to other fixed-point computation problems. Algorithmic boosting is the principle of computing fixed points by taking (long-run) averages of iterated maps and it is a generalization of exponentiation. We first define our method in the setting of nonlinear maps. Secondly, we restrict attention to convergent linear maps (for computing dominant eigenvectors, for example, in the PageRank algorithm) and show that our algorithmic boosting method can set in motion {\em exponential speedups in the convergence rate}. Thirdly, we show that algorithmic boosting can convert a (weak) non-convergent iterator to a (strong) convergent one. We also consider a {\em variational approach} to algorithmic boosting providing tools to convert a non-convergent continuous flow to a convergent one. Then, by embedding the construction of averages in the design of the iterated map, we constructively prove the existence of Nash equilibria (and, therefore, Brouwer fixed points). We then discuss implementations of averaging and exponentiation, an important matter even for the scalar case. We finally discuss a relationship between dominant (PageRank) eigenvectors and Nash equilibria.
\end{abstract}

\section{Introduction}

From a purely algorithmic perspective, the main idea explored in this paper is that {\em exponentiation is a form of averaging} by which we treat {\em exponentiation} (for example, raising a vector to an exponent) and {\em averaging} (for example, averaging the orbit of a dynamical system) under the same analytical footing. We capture this intuition in a new rigorous definition of the exponential function. This definition is one of our primary conceptual contributions. But at a more elementary technical level, this paper is an inquiry into the computational foundations of fixed point theory. Various fixed point theorems (such as the Brouwer fixed-point theorem and the Knaster-Tarski theorem) are non-constructive and our ultimate goal is to develop algorithms by which this gap can be bridged. We first look at constructive fixed point theorems such as the Perron-Frobenius theorem and the minimax theorem, which are, in fact, related by a theorem of \cite{Blackwell1}. We then devise a {\bf constructive existence proof of Nash equilibrium} (and, therefore, of Brouwer fixed points). Toward obtaining such constructive proof, we leverage a powerful algorithmic abstraction whose development was driven by a curiosity question, namely, what, if any, role {\em exponentiation} can play in such a computational theory of fixed points. Exponentiation plays a key role in computational learning theory \citep{Littlestone, COLTpaper, Boosting}. Learning theory is grounded on a different analytical foundation than fixed point theory, nevertheless, the approaches are related as it is typically a small step to convert an (online) learning algorithm to a {\em fixed-point iterator} (for example, by simply assuming the online adversary is nature).

\subsection{Our main idea in simple terms}

Our curiosity had many fruits to bear: Our thesis is that {\em applying exponentiation to algorithmic problems is a powerful pursuit} and our line of discourse can be appositely framed by asking the question: {\bf How can we exponentiate entire algorithms?} Our idea to answer this question is the observation that, if $f( \cdot )$ is a self-map on the set of real numbers ($\mathbb{R}$), by redefining the exponential of $f$ as
\begin{align*}
\exp(f(x)) = \sum_{k=0}^{\infty} \frac{1}{k!} f^k(x),
\end{align*}
where $f^k(x) = f(f( \cdots f(x)))$, we can meaningfully think of exponentiation as an averaging process (in signal processing terms, as a {\em filter}). Normalizing the previous expression by dividing with $\exp(1)$, we obtain an exponential operator as a convex combination of the powers of $f$ that, as we will prove shortly, inherits the fixed points of $f$. That is, if $x^*$ is a fixed point $f$, in that $f(x^*)  = x^*$, then
\begin{align*}
\frac{1}{\exp(1)} \exp(f(x^*)) = x^*.
\end{align*}
To apply this (averaging) idea to algorithms, we need to think of a function as an algorithm and the analogy is immediately apparent if we specifically look at algorithms that compute fixed points: Various problems in computer science can be cast as instances of computing a fixed point of a map. Given a set of vectors, say $X$, and a self-map $f(\cdot)$ on $X$, a {\em fixed point} of $f$ is an $x^* \in X$ such that $f(x^*) = x^*$. In this paper, we focus on one particular method of computing a fixed point of a map, namely, the iterative application of either the map itself or some map that is naturally related to it. We further focus on two particular problem domains, namely, {\em linear algebra} and {\em game theory}.

\subsection{Exponentiation in linear algebra}

If $X$ is Euclidean space and $f$ is a linear map, then the action of $f$ on $X$ can be represented by a square matrix, say $A$. That is, $f(x) = Ax$. In this case, a fixed point of $f$ is a vector $x^* \in X$ such that $Ax^* = x^*$. That is, $x^*$ is an eigenvector of $A$ corresponding to eigenvalue $1$, if it exists. A prominent example of an algorithm that can be cast as a problem of computing an eigenvector corresponding to the eigenvalue $1$ is PageRank \citep{PageRank}. In fact, our algorithmic boosting theory began as an effort to apply exponentiation to the PageRank algorithm.

\subsubsection{Trying to exponentiate PageRank}

One approach to apply exponentiation to the PageRank algorithm is in the paradigm of the {\em multiplicative weights update method} \citep{AHK}, which is a very successful paradigm of designing algorithms that manifests in various disciplines of theoretical computer science. For example, in theoretical machine learning, exponentiated multiplicative updates manifest in {\em boosting,} which refers to a method of producing an accurate prediction rule by combining inaccurate prediction rules \citep{Boosting}. A well-established boosting algorithm is AdaBoost \citep{FreundSchapire1}. Related to AdaBoost is the Hedge algorithm for playing a mathematical game \citep{FreundSchapire2}. At the heart of AdaBoost and Hedge lies the weighted majority algorithm \citep{Littlestone} (see also \citep{COLTpaper}), which is also based on exponentiation. It was natural to ponder whether PageRank bears the structure of one of these problems. In fact, if we could spot a gradient we could just exponentiate that. The answer we give in this paper is unlikely to be unique but it motivated basic results in algorithmic exponentiation.

\subsubsection{Exponentiating PageRank by exponentiating the Google matrix}

{\em The critical point in the development of the theory we lay out in this paper was in realizing that PageRank bears the structure of a linear fixed point problem as it was then a no-brainer to exponentiate the matrix operator itself.} An advantage of this approach is that our method applies with minor or no modifications to other computational link-analysis problems such that {\em spectral rankings} \citep{Vigna} and {\em community detection} \citep{Newman}. In fact, the {\em exponentiated power method} and its {\em truncation} we propose in this paper are new ideas in numerical linear algebra \citep{Golub} and it is perhaps surprising that they admit a very simple analysis to theoretically ground the acceleration benefits. Our fully exponentially powered method consists in applying power iterations using the {\em matrix exponential} which is defined by the power series
\begin{align*}
\exp(A) = \sum_{k=0}^{\infty} \frac{1}{k!} A^k
\end{align*}
directly generalizing the previous definition of the exponential function. Our exponentiated power method boosts the convergence rate of the simple power method by an exponential factor (while retaining the character of the simple power method, simply by replacing the matrix operator being iterated with an exponentiated version thereof). This result may appear at first to only be principally theoretical as it is, in general, not possible (or even practical) to compute the matrix exponential exactly. However, we show that using only $m > 0$ terms in the power series we obtain a similar convergence rate (by a practical iteration algorithm), thus, reaping the acceleration benefits while obviating the need to exactly compute the exponential matrix. Owing to their simplicity (retaining the philosophy of the simple power iteration method, which works well in practice especially as the size of the matrix being iterated is large), our exponentially powered acceleration methods based on averaging the powers of a corresponding matrix operator can be combined with standard techniques for accelerating the convergence of sequences (and, for example, the computation of the PageRank vector by the power method) such as the {\em vector epsilon algorithm} (see \citep{Brezinski}). The epsilon algorithm and related Pad\'e approximation theory are also discussed slightly more thoroughly later in this section (after discussing mathematical games first).

\subsection{A generalization of the exponential matrix}

Once such a powerful idea of computing fixed points by averaging powers of linear operators has been set into place it is natural to try to apply it to fixed-point problems that we don't have efficient algorithms for, for example, Nash equilibrium problems in game theory (where the operators acting on vectors are nonlinear). We, thus, generalize the matrix exponential in the following fashion:

\subsubsection{Fully exponentially powered boosting}

\begin{definition}
Let $X$ be a convex set of vectors and $f( \cdot )$ a self-map on $X$. Given $\alpha > 0$, we define the fully exponentially powered self-map $\exp(\alpha f( \cdot ))$ as
\begin{align*}
\exp(\alpha f(x)) = \frac{1}{\exp(\alpha)} \sum_{k = 0}^{\infty} \frac{1}{k!} \alpha^k f^k(x)
\end{align*}
where
\begin{align*}
\exp(\alpha) = \sum_{k=0}^{\infty} \frac{1}{k!} \alpha^k
\end{align*}
and
\begin{align*}
f^0(x) = x \quad f^1(x) = f(x) \quad f^2(x) = f( f(x) ) \quad \cdots.
\end{align*}
We call $\alpha$ the learning rate of the fully exponentially powered map.
\end{definition}

\begin{lemma}
Let $X$ be a convex set of vectors and $f( \cdot )$ a self-map on $X$. Then, if $x^*$ is a fixed point of $f$, that is, if $f(x^*) = x^*$, it is also a fixed point of $\exp(\alpha f(\cdot))$.
\end{lemma}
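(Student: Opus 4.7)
The plan is to verify the fixed-point property by direct substitution into the definition, exploiting the fact that iterating $f$ at a fixed point leaves it unchanged at every power.

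First I would establish the auxiliary fact that $f^k(x^*) = x^*$ for all integers $k \geq 0$. This follows by a trivial induction on $k$: the base case $f^0(x^*) = x^*$ is immediate from the definition of $f^0$, and the inductive step uses $f^{k+1}(x^*) = f(f^k(x^*)) = f(x^*) = x^*$, where the last equality is the hypothesis that $x^*$ is a fixed point of $f$.

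Next I would substitute $x = x^*$ into the defining series for $\exp(\alpha f(\cdot))$, pull the constant vector $x^*$ out of the sum, and recognize the remaining scalar series as $\exp(\alpha)$, which then cancels the leading normalizer $1/\exp(\alpha)$, yielding $x^*$. Concretely,
\begin{align*}
\exp(\alpha f(x^*)) = \frac{1}{\exp(\alpha)} \sum_{k=0}^{\infty} \frac{1}{k!} \alpha^k f^k(x^*) = \frac{1}{\exp(\alpha)} \left( \sum_{k=0}^{\infty} \frac{1}{k!} \alpha^k \right) x^* = x^*.
\end{align*}

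There is no real obstacle here; the only subtlety worth a brief remark is that the series converges termwise in the ambient vector space because each term equals $\tfrac{1}{k!}\alpha^k x^*$ and the scalar series $\sum \tfrac{1}{k!}\alpha^k$ converges absolutely to $\exp(\alpha)$, so interchanging the infinite sum with the scalar multiplication by $x^*$ is legitimate. Convexity of $X$ plays no role in this particular lemma (it matters for ensuring $\exp(\alpha f(x)) \in X$ in general, since the normalized expression is a convex combination of the points $f^k(x)$), but it is not needed to conclude the fixed-point property itself.
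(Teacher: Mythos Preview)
Your proof is correct and follows essentially the same direct-substitution approach as the paper: replace each $f^k(x^*)$ by $x^*$, factor $x^*$ out of the series, and cancel $\exp(\alpha)$. Your added induction for $f^k(x^*)=x^*$ and the convergence/convexity remarks are reasonable elaborations but not needed beyond what the paper does.
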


\begin{proof}
Since $x^*$ is a fixed point of $f$, we obtain that
\begin{align*}
\exp(\alpha f(x^*)) = \frac{1}{\exp(\alpha)} \sum_{k = 0}^{\infty} \frac{1}{k!} \alpha^k f^k(x^*) = \frac{1}{\exp(\alpha)} \sum_{k = 0}^{\infty} \frac{1}{k!} \alpha^k x^* = \frac{1}{\exp(\alpha)} \left( \sum_{k = 0}^{\infty} \frac{1}{k!} \alpha^k \right) x^* = x^*,
\end{align*}
where in the last equality we applied the definition of $\exp(\alpha)$. This completes the proof.
\end{proof}

\begin{lemma}
Let $X$ be Euclidean space and $f( \cdot )$ a linear self-map on $X$. Then,
\begin{align*}
\exp(\alpha f(x)) = \frac{1}{\exp(\alpha)} \exp(\alpha A) x.
\end{align*}
\end{lemma}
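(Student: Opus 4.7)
The plan is to unwind the definition of $\exp(\alpha f(x))$ from the previous definition and use linearity of $f$ to rewrite each $f^k(x)$ in terms of powers of the associated matrix $A$, then recognize the resulting series as the matrix exponential acting on $x$.

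First I would invoke the representation of a linear self-map on Euclidean space by a square matrix $A$, so that $f(x) = Ax$. An easy induction on $k$ then gives $f^k(x) = A^k x$ for every $k \geq 0$, with the base case $f^0(x) = x = A^0 x$ handled by convention. Substituting this into the definition yields
\begin{align*}
\exp(\alpha f(x)) = \frac{1}{\exp(\alpha)} \sum_{k=0}^{\infty} \frac{1}{k!} \alpha^k A^k x.
\end{align*}

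Next I would pull the vector $x$ out of the sum and group $\alpha^k A^k = (\alpha A)^k$, rewriting the right-hand side as
\begin{align*}
\frac{1}{\exp(\alpha)} \left( \sum_{k=0}^{\infty} \frac{1}{k!} (\alpha A)^k \right) x,
\end{align*}
which is precisely $\frac{1}{\exp(\alpha)} \exp(\alpha A) x$ by the power series definition of the matrix exponential given earlier in the excerpt.

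The only nontrivial step is the interchange between the infinite sum of vectors and the factoring out of $x$; this is justified by the absolute convergence of the matrix exponential series $\sum_{k \geq 0} \tfrac{1}{k!}(\alpha A)^k$ in any submultiplicative matrix norm, which in turn implies convergence of the vector series $\sum_{k \geq 0} \tfrac{1}{k!} \alpha^k A^k x$ for any $x$, so the finite partial-sum identity passes to the limit. I do not anticipate a real obstacle here, as this is essentially a direct computation once linearity has been used.
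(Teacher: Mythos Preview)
Your proposal is correct and follows essentially the same approach as the paper's own proof: unwind the definition of $\exp(\alpha f(x))$, use linearity to replace $f^k(x)$ by $A^k x$, and recognize the resulting series as $\exp(\alpha A)x$. The paper's argument is a one-line chain of equalities without the induction or convergence remarks, but the substance is identical.
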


\begin{proof}
Simply following the definitions, we obtain that
\begin{align*}
\exp(\alpha f(x)) = \frac{1}{\exp(\alpha)} \sum_{k = 0}^{\infty} \frac{1}{k!} \alpha^k f^k(x) = \frac{1}{\exp(\alpha)} \sum_{k = 0}^{\infty} \frac{1}{k!} \alpha^k A^k x = \frac{1}{\exp(\alpha)} \exp(\alpha A) x
\end{align*}
and this completes the proof.
\end{proof}

\noindent
Unsure how to call this generalization of exponentiation, we decided to call it {\em algorithmic boosting}.

\subsubsection{Variants of algorithmic boosting}

One may ponder if the exponential power series is essential in the definition of algorithmic boosting, and a moment of thought immediately suggests that it is not. We may generalize exponentially powered boosting as follows: Let $X$ be a convex set of vectors. Given a self-map $f(\cdot)$ on $X$, we define the powered correspondence $\mathbb{P}(f(\cdot))$ as a self-correspondence on $X$ such that $\mathbb{P}(f(x))$ is the $\omega$-limit set of the sequence
\begin{align*}
\left\{ \frac{1}{A_K} \sum_{k=0}^K \alpha_k f^k(x) \right\}_{K= 0}^{\infty},
\end{align*}
where $\{\alpha_k\}_{0}^{\infty}$ is a sequence of nonnegative scalars and $A_K = \sum_{k=0}^K \alpha_k$. Observe that since $X$ is convex, $\mathbb{P}(f(\cdot))$ is well-defined at every $x \in X$. This definition can be specialized based on the pattern of the sequence $\{\alpha_k\}_{0}^{\infty}$ of weights used in the averaging process: For example, if $\alpha_k=1, k = 0, 1, \ldots$, we obtain what we may call {\em geometrically powered boosting} (cf. geometric series), whereas if $\alpha_0 = 1$ and $\alpha_k = 1/k, k = 1, 2, \ldots$, we obtain what we may call {\em harmonically powered boosting} (cf. harmonic series). If there exists a natural number $m$ such that $\alpha_k = 0, k > m$, we obtain truncated series. We will see that these definitions also admit {\em variational interpretations}.\\

\noindent
Starting here, with the exception of Section \ref{exponentiatedpowermethod}, we use upper-case letters to denote vectors.

\subsection{Algorithmic boosting in game theory}

Algorithmic boosting theory in games is a generalization of a fundamental result in computational learning theory, namely, that the empirical average of iterated Hedge using a fixed learning rate converges to an approximate Nash equilibrium in a zero-sum game \citep{FreundSchapire2}. Let us recall that the fundamental solution concept in game theory is the Nash equilibrium, which is a fixed point of the best response correspondence. Nash's proof of the existence of an equilibrium in an $N$-person game is non-constructive. In this paper, we first focus on zero-sum games that readily admit a constructive Nash-equilibrium existence proof. We then provide a constructive existence proof of a symmetric Nash equilibrium in a symmetric bimatrix game. The problem of computing such an equilibrium is PPAD-complete \citep{CDT, Daskalakis} and, therefore, our constructive proof amounts to an existence result of the Nash equilibrium in a general game.

\subsubsection{Preliminaries in game theory}

Given a symmetric bimatrix game $(C, C^T)$, we denote the corresponding standard (probability) simplex by $\mathbb{X}(C)$. $\mathbb{\mathring{X}}(C)$ denotes the relative interior of $\mathbb{X}(C)$. The elements (probability vectors) of $\mathbb{X}(C)$ are called {\em strategies}. We call the standard basis vectors in $\mathbb{R}^n$ {\em pure strategies} and denote them by $E_i, i =1, \ldots, n$. {\em Symmetric Nash equilibria} are precisely those combinations of strategies $(X^*, X^*)$ such that $X^*$ satisfies $\forall Y \in \mathbb{X}(C) : X^* \cdot CX^* - Y \cdot CX^* \geq 0$. We call $X^*$ a {\em symmetric Nash equilibrium strategy}. A symmetric Nash equilibrium is guaranteed to always exist \citep{Nash2}. A simple fact is that $X^* \in \mathbb{X}(C)$ is a symmetric Nash equilibrium strategy if and only if $(CX^*)_{\max} - X^* \cdot CX^* = 0$. If $(CX^*)_{\max} - X^* \cdot CX^* \leq \epsilon$, $X^*$ is called an $\epsilon$-approximate equilibrium strategy. A symmetric bimatrix game $(C, C^T)$ is zero-sum if $C$ is antisymmetric, that is, $C = -C^T$.

\subsubsection{Computing a Nash equilibrium by averaging the powers of a map}

A plausible approach to compute a Nash equilibrium in this setting is to use Hedge. Given $C$, Hedge is given by map $T : \mathbb{X}(C) \rightarrow \mathbb{X}(C)$ where
\begin{align}
T_i(X) = X(i) \frac{\exp\{\alpha (CX)_i \}}{\sum_{j=1}^n X(j) \exp\{\alpha (CX)_j\}}, i =1, \ldots, n,\label{algebraicexpression}
\end{align}
and $\alpha$ is a parameter called the {\em learning rate}. To compute a symmetric Nash equilibrium of $(C, C^T)$ we can, for example, iterate Hedge using a fixed learning rate starting from an interior strategy $X^0 \in \mathbb{\mathring{X}}(C)$. However, a simple fact is that the sequence $\{ T^k(X^0) \}_{k=0}^{\infty}$ may not converge.

Algorithmic boosting theory factors at this critical moment to obtain convergence. Although the sequence $\{ T^k(X^0) \}_{k=0}^{\infty}$ may not converge, translating the aforementioned result of \cite{FreundSchapire2} in the language we are trying to develop in this paper, {\em geometrically powered boosting of iterated Hedge using a fixed learning rate $\alpha > 0$ converges to an $\epsilon$-approximate Nash equilibrium, where $\epsilon$ can be made as small as desired by choosing an accordingly small value of the learning rate $\alpha$}. What's more, \cite{AvramopoulosSSRN} shows in a recent contribution to this fundamental question that, under a diminishing learning rate schedule, what we may now call harmonically powered boosting of iterated Hedge converges to an exact Nash equilibrium of a symmetric zero-sum game. In this paper, we develop a set of algorithmic techniques that draw on {\em convex optimization} to devise an equilibrium fully polynomial time approximation scheme in a symmetric zero-sum game by averaging iterated Hedge. In this vein, we show that starting from the uniform strategy, to compute an $\epsilon$-approximate Nash equilibrium, our scheme requires at most
\begin{align*}
\floor*{\frac{\ln(n)}{\frac{\epsilon}{2}\ln\left(1+ \frac{\epsilon}{2} \right)}}
\end{align*} 
iterations. Our bound nearly exactly matches that of \cite{FreundSchapire2} but, since
\begin{align*}
\ln\left(1+ \frac{\epsilon}{2} \right) > \frac{\frac{\epsilon}{2}}{1 + \frac{\epsilon}{2}}, \epsilon > 0,
\end{align*}
it is better. We also leverage our techniques to obtain matching bounds under a significantly more numerically stable version of Hedge that is obtained by generalizing \eqref{algebraicexpression} as $T: \mathbb{X}(C) \times \mathbb{X}(C) \rightarrow \mathbb{X}(C)$, where
\begin{align}
T_i(X | Z) = X(i) \frac{\exp\{\alpha (CZ)_i \}}{\sum_{j=1}^n X(j) \exp\{\alpha (CZ)_j\}}, i =1, \ldots, n.\label{generalization}
\end{align}
We discuss our approach below. Before that let's get to our result on the continuous limit of Hedge.

\subsubsection{Computing a Nash equilibrium by averaging an orbit of a flow}

Taking the long-run average of an orbit of a differential equation is the {\em variational analogue of algorithmic boosting}. In this paper, we also consider a variational analog of Hedge, namely, the {\em replicator dynamic} \citep{TaylorJonker}. We first show that Hedge is a convergent and consistent numerical integrator of the replicator dynamic. We then prove that, in a symmetric zero-sum game, the $\omega$-limit set of the average of every orbit of this dynamic starting in the interior of the simplex is a symmetric Nash equilibrium strategy. Our result informs a line of research regarding the divergent behavior of the replicator dynamic in zero-sum games \citep{Mertikopoulos, Biggar2} in an elegant fashion: Although orbits may diverge (for example, they can cycle \citep{Akin}), the long-run average converges. The theory we lay out in this paper further informs a line of research regarding the divergent behavior of evolutionary dynamics in general \citep{Flokas, Milionis}: Although the orbits themselves may fail to converge, we stipulate that, in the fashion of our result on the replicator dynamic but also in the fashion of \citep{FreundSchapire2}, a variety of carefully constructed averages may converge.

\subsubsection{Computing a Nash equilibrium by clairvoyant averaging}

As a case in point, a contribution of this paper is an algorithm for computing a symmetric Nash equilibrium in a symmetric bimatrix game that gives a constructive proof of existence of such an equilibrium. To obtain that algorithm, we construct the Euler approximation of \eqref{generalization} to obtain
\begin{align}
T_i(X | Z) = X(i) (1 + \alpha ((CZ)_i - X \cdot CZ)), i =1, \ldots, n\label{Euler_approximation}
\end{align}
and, in a sense, generalizing the clairvoyant regret minimization paradigm \citep{Clairvoyant}, we require that
\begin{align}
X^{K+1}(i) \equiv \frac{1}{A_K} \sum_{k=0}^K \alpha_k Z^k(i) = X^K(i) (1 + \alpha_K ((CZ^K)_i - X \cdot CZ^K)), i =1, \ldots, n\label{Euler_approximation}
\end{align}
for $K = 0, 1, \ldots,$. We show that there is a sequence of learning rates $\{ \alpha_K \}$ and ``multipliers'' $\{ Z^K \}$ such that the previous system of equations always has a solution and that the sequence $\{ X^K \}$ converges to an approximate symmetric Nash equilibrium where the approximation error can be made arbitrarily small by choosing a correspondingly small upper bound on the learning rates.

\subsection{Implementing algorithmic boosting}


Once a sequence of iterates is shown to converge to a desired fixed point, there is a general-purpose technique to accelerate that convergence, namely, the {\em vector epsilon algorithm} \citep{Pade, EpsilonAlgorithm}. In principle, the convergent sequence can be obtained either by the iterative application of a discrete map or the discretization of a differential equation. Naturally the technique applies to convergent sequences of averages. The vector epsilon algorithm can be formalized, for example, using Pad\'e approximation theory. To keep our paper focused, we first discuss Pad\'e approximants as they apply to exponentiating scalars and to the computation of the matrix exponential followed by results on the implementation of the exponentiation in Hedge. Such implementations of algorithmic boosting are one of the most exciting parts of this research.

\subsubsection{Using Pad\'e approximation theory to compute the matrix exponential}

One approach to approximately compute the matrix exponential is to truncate its Taylor series. In fact, this is just one out a big list of methods \citep{Moler}. A method that is often used in practice is to compute the Pad\'e approximant. In general, Pad\'e approximations approximate a function by a rational function of a given order. A $[p/q]$ Pad\'e approximant is a ratio of a polynomial of degree $p$ over a polynomial of degree $q$. For example, the $[3/3]$ Pad\'e approximant of $\exp(x)$, denoted by $\exp_{3/3}(x)$, is
\begin{align*}
\exp_{3/3}(x) = \frac{120 + 60x + 12 x^2 + x^3}{120 - 60x +12 x^2 - x^3}.
\end{align*}
Pad\'e approximations of the matrix exponential follow the same principle \citep{Arioli, Higham}. Since Pad\'e approximation theory generalizes to non-convergent series, we believe there is fertile ground to apply this theory to implement general algorithmic boosting methods.

\subsubsection{Using relative entropy programming to implement Hedge}

Once a general framework for the implementation of the exponential function has been set into place, it becomes immediately apparent that the exponentiation operation used in the Hedge map may entail complexities that require careful attention. \cite{Krichene} show that the Hedge map is a dual formulation of a convex optimization problem (with the same solution). The exact solution of this problem can be computed to any desirable precision in polynomial time but this implementation requires solving a {\em relative entropy program} \citep{Chandra} using an {\em interior-point convex-programming method} \citep{Nesterov}. In a numerical experiment we show that there is a large discrepancy between the simple algebraic implementation of Hedge as that is suggested in equation \eqref{algebraicexpression} and the robust method that uses convex programming.

\subsubsection{Approximately solving a zero-sum game using a numerically stable Hedge}

It is natural then to ask, for example, to what extent the bounds of \citep{FreundSchapire2} render practical algorithms to approximately solve zero-sum games (and, therefore, also linear programs) using inexpensive implementations of the exponential function. Our intuition from the aforementioned numerical experiment suggests that the vanilla implementation of the exponential function may not always suffice to obtain the desired performance. Our numerical experiments in Matlab show that there is a deviation in the evolution of the Hedge map under inexpensive exponentiation and under the robust relative entropy programming implementation {\em even in the rock-paper-scissors game using a small learning rate}. In this paper, we show how to restore numerical stability, using inexpensive exponentiation in the same numerical environment, first by generalizing Hedge as in equation \eqref{generalization} and then requiring the ``multipliers'' $Z$ to only correspond to pure strategies ($Z \in \{E_1, \ldots, E_n\}$). Our algorithm is straightforward to implement, it is conceptually simpler than Hedge as $CZ$ only correspond to column vectors of $C$, it is numerically stable since the exponentials can be precomputed to any desired accuracy ahead of time, and as we prove in the sequel, it has the same theoretical performance as Hedge (in that the bounds are identical).

\subsection{Other related work}


Closely related to our algorithmic boosting theory is {\em ergodic theory} that systematically studies {\em conservative systems} from a similar perspective. The problems we study in this paper are not, in general, conservative. In such a sense, algorithmic boosting theory is a generalization of ergodic theory. We believe that studying ergodic systems from an algorithmic boosting perspective, for example, exponentiating the Hamiltonian operator, is an interesting direction for future work. 

The precise role of exponentiation in the multiplicative weights update method is not well-understood. An aspect is explored by \cite{Pelillo} who report an empirical finding that using exponentiation (in the same fashion as that is used in the multiplicative weights update method) in quadratic optimization significantly increases the convergence rate over more elementary algorithms that obviate exponentiation: For example, Hedge is faster than the {\em replicator dynamic} or the {\em discrete-time replicator dynamic}. Our independently performed experiments confirm this.

Our variational perspective on algorithmic boosting is in the paradigm of \cite{Jordan} who initiated the study of acceleration methods in optimization from a variational perspective. In fact, that Hedge is a discretization of the replicator dynamic was an idea drawn from that paradigm.

Although our paper is not the first to propose using the matrix exponential in link analysis \citep{Miller}, to the extent of our knowledge, our paper is the first that observes that using the matrix exponential accelerates the classical power method and analyzes the precise impact on the convergence rate, in particular, that the convergence rate increases by an exponential factor.

Related to link analysis is the problem of computing probabilities of random walks \citep{Aleliunas} where the technique of averaging powers of a linear map has recently factored prominently in efficient algorithms that approximate these probabilities \citep{Vadhan}.

Let us also note that different from previous constructive proofs of existence of the Nash equilibrium, in particular, the Lemke-Howson algorithm and the algorithm of \cite{LMM} that is related to \citep{Lipton94Simple}, our existence proof is by a {\em convergent sequence}, which, for example, renders it amenable to acceleration by the epsilon algorithm \citep{EpsilonAlgorithm}. 

Let us, finally, note there is an algorithm for approximately solving the PPAD-complete {\em fractional stable paths problem} \citep{Kintali} that does not require averaging \citep{Kintali-FSPP}.

\subsection{Overview of the rest of this paper}

In the next section, we enumerate our contributions in detail whereas in Section \ref{exponentiatedpowermethod} we present and analyze our exponentiated power method. In Sections \ref{HedgeBoosting} and \ref{ReplicatorBoosting} we analyze algorithmic boosting in game theory (both in discrete and continuous time). Our clairvoyant averaging principle is presented and analyzed in Section \ref{constructive}. Sections \ref{NumericalHedge} and \ref{DoubleExponentiation} discuss implementations of the Hedge map. Section \ref{Pagerank-Nash} draws a relationship between Pagerank and Nash equilibria. Finally, Section \ref{Conclusion} concludes.

\section{Our detailed contributions}
\label{detailed}

In this paper, we develop a theory of {\em algorithmic exponentiation} based on the {\em matrix exponential} and its generalization to {\em nonlinear} fixed-point computation systems. Our contributions are:

\begin{enumerate}

\item We show that exponentiating convergent {\em iterated linear maps}, such as the Google matrix in PageRank, using a positive learning rate gives an exponential increase in their convergence rate, which increases with the learning rate. The acceleration depends on the relative difference between the dominant and the radius of the second dominant eigenvalue. We further show this method to be practical: Computation of the exact exponential matrix can be obviated by approximating it with the truncated series, while obtaining comparable performance.

\item A slight generalization of algorithmic exponentiation is {\em algorithmic boosting}, the principle of computing fixed points by averaging the powers of an iterated map. Focusing on game theory, we slightly improve the Nash equilibrium approximation error of the Hedge map in a symmetric zero-sum game by better analysis. Our improvement is perhaps marginal but we further leverage this analysis to obtain a new Nash equilibrium computation algorithm in a symmetric zero-sum game that is related to the archetypical Hedge map.

\item We show that Hedge is a consistent and convergent numerical integrator of the (continuous-time) replicator dynamic.

\item We consider a {\em variational approach} to algorithmic boosting in the same game-theoretic setting and show that every limit point of the long-run average of the replicator dynamic (in a symmetric zero-sum game) is a Nash equilibrium. 

\item We introduce a ``clairvoyant averaging principle'' using the Euler approximation of a generalization of the Hedge map, which we leverage to obtain a constructive existence proof of a symmetric Nash equilibrium in a symmetric bimatrix game. Since this problem is PPAD-complete, we, thus, also obtain a constructive existence proof of Brouwer fixed points.

\item We observe that there exists a polynomial-time algorithm to compute the iterates of the Hedge map, using relative entropy programming, and demonstrate that there can be a substantial difference in numerical performance between this robust implementation and the naive one.

\item We devise a simplified Hedge that attains identical equilibrium approximation bounds in symmetric zero-sum games while at the same time being more numerically stable than the Hedge map itself. The analysis of this algorithm's bounds draws on the same techniques we introduce to improve the Hedge equilibrium approximation bound (as mentioned above).

\item We establish a relationship between dominant eigenvector computations and Nash equilibria.

\end{enumerate}

\section{Algorithmically boosting convergent linear fixed-point iterators}
\label{exponentiatedpowermethod}

In this section, we consider the power iteration method for computing a dominant eigenvector of an $n \times n$ matrix $A$, which starts with a random vector $x_0$ and recursively applies the linear map corresponding to $A$ to $x_0$. This process converges to the dominant eigenvector. In this section, we first analyze the following idea: Instead of using $A$ itself in the power iterations, use $\exp(\alpha A)$, where $\alpha > 0$ and $\exp( \cdot )$ is the matrix exponential. We show that our proposed method gives an exponential increase in the convergence rate (since exponentiation exponentiates the eigenvalues). Such an increase is obtained by an exact computation of the matrix exponential. We also analyze the precise increase in the convergence rate obtained by truncating the matrix exponential series to order $m \geq 1$ wherein the upper incomplete gamma function factors in the convergence ratio of the corresponding geometric convergence rate. Let us start by defining the power iteration method.

\subsection{The power iteration method}

Given an $n \times n$ matrix $A$ assume that its eigenvalues are ordered such that $|\lambda_1| > |\lambda_2| \geq \cdots \geq |\lambda_n|$. The power iteration method (for example, see \citep[Chapter 7.3]{Golub}) computes the {\em dominant} (i.e., largest in modulus) eigenvalue $\lambda_1$ and corresponding (dominant) eigenvector. To that end, it starts with a $n \times 1$ vector $x_0$ and iteratively generates a sequence $\{ x_k\}$ using the recurrence relation
\begin{align*}
x_{k+1} = \frac{A x_k}{\| A x_k \|}, k=0,1,2,\ldots
\end{align*}
The sequence $\{x_k\}$ converges to the dominant eigenvector provided that $x_0$ is not orthogonal to the left eigenvector corresponding to $\lambda_1$. Furthermore, under these assumptions, the sequence
\begin{align*}
\left\{ \frac{x_k^T A x_k}{x_k^T x_k} \right\}_{k=0}^{\infty}
\end{align*}
converges to $\lambda_1$. A typical case in the practical application of the method is that $\lambda_1$ is equal to one.

\subsection{Analysis when $A$ is diagonalizable}

Let us first consider (as a warmup) the power iteration methods assuming $A$ is {\em diagonalizable}. We consider the simple power iteration method first followed by the exponentiated power method.

\subsubsection{The simple power iteration method}

So let us assume $A$ is diagonalizable, let $\lambda_1,  \lambda_2, \ldots, \lambda_n$ be the $n$ eigenvalues of $A$ (counted with multiplicity) and let $u_1, u_2, \ldots u_n$ be the corresponding eigenvectors. Suppose that $\lambda_1$ is the dominant eigenvalue, so that $|\lambda_1| > |\lambda_j|$ for $j > 1$. We then have
\begin{align*}
x_0 = c_1 u_1 + c_2 u_2 + \cdots + c_n u_n
\end{align*}
and
\begin{align*}
A^k x_0 &= c_1 A^k u_1 + c_2 A^k u_2 + \cdots + c_n A^k u_n\\
&= c_1 \lambda_1^k u_1 + c_2 \lambda_2^k u_2 + \cdots + c_n \lambda_n^k u_n\\
&= c_1 \lambda_1^k \left( u_1 + \frac{c_2}{c_1} \left( \frac{\lambda_2}{\lambda_1} \right)^k u_2 + \cdots + \frac{c_n}{c_1} \left( \frac{\lambda_n}{\lambda_1} \right)^k u_n \right).
\end{align*}
Observe now that since
\begin{align*}
\left| \frac{\lambda_j}{\lambda_1} \right| < 1, j > 1
\end{align*}
we obtain that
\begin{align*}
A^k x_0 \rightarrow c_1 \lambda_1^k u_1
\end{align*}
and since 
\begin{align*}
x_{k+1} = \frac{A x_k}{\| A x_k \|}, k=0,1,2,\ldots
\end{align*}
we obtain that the sequence $\{ x_k \}$ converges to (a multiple of) the eigenvector $u_1$. The convergence is {\em geometric} with ratio
\begin{align*}
\left| \frac{\lambda_2}{\lambda_1} \right|,
\end{align*}
where $\lambda_2$ is the second dominant eigenvalue.

\subsubsection{The fully exponentially powered iteration method}

Like the power method, our {\em exponentiated power method} computes the dominant eigenvalue and corresponding eigenvector of an $n \times n$ matrix $A$ under the same conditions as the aforementioned power method, using instead the {\em matrix exponential} of $\alpha A$, which we denote by $\exp(\alpha A)$, where
\begin{align*}
\exp(A) = I + A + \frac{1}{2} A^2 + \cdots,
\end{align*}
and $\alpha > 0$ is a parameter we call the {\em learning rate}. The exponentiated power iteration starts with a $n \times 1$ vector $x_0$ and iteratively generates a sequence $\{ x_k\}$ using the recurrence relation
\begin{align*}
x_{k+1} = \frac{\exp(\alpha A) x_k}{\| \exp(\alpha A) x_k \|}, k=0,1,2,\ldots
\end{align*}
Let us prove that the sequence $\{x_k\}$ converges to the dominant eigenvector under the same conditions that the power method converges. To that end, let $\lambda_1,  \lambda_2, \ldots, \lambda_n$ be the $n$ eigenvalues of $A$ (counted with multiplicity) and let $u_1, u_2, \ldots u_n$ be the corresponding eigenvectors. Suppose that $\lambda_1 > 0$ is the dominant eigenvalue, so that $\lambda_1 > |\lambda_j|$ for $j > 1$. We then have
\begin{align*}
x_0 = c_1 u_1 + c_2 u_2 + \cdots + c_n u_n
\end{align*}
and
\begin{align*}
\exp(\alpha A)^k x_0 &= c_1 \exp(\alpha A)^k u_1 + c_2 \exp(\alpha A)^k u_2 + \cdots + c_n \exp(\alpha A)^k u_n\\
&= c_1 \exp(\alpha \lambda_1)^k u_1 + c_2 \exp(\alpha \lambda_2)^k u_2 + \cdots + c_n \exp(\alpha \lambda_n)^k u_n\\
&= c_1 \exp(\alpha \lambda_1)^k \left( u_1 + \frac{c_2}{c_1} \left( \exp(\alpha (\lambda_2 - \lambda_1) \right)^k u_2 + \cdots + \frac{c_n}{c_1} \left( \exp(\alpha (\lambda_n - \lambda_1) \right)^k u_n \right),
\end{align*}
where we have used the simple property that if $u$ is an eigenvector of $A$ corresponding to eigenvalue $\lambda$, then it is also an eigenvector of $\exp(A)$ with corresponding eigenvalue $\exp(\lambda)$. Observe now that, assuming $\lambda_1 > 0$, since
\begin{align*}
\frac{\exp(\alpha \lambda_j)}{\exp(\alpha \lambda_1)} < 1, j > 1
\end{align*}
we obtain that
\begin{align*}
\exp(\alpha A)^k x_0 \rightarrow c_1 \exp(\alpha \lambda_1)^k u_1
\end{align*}
and since 
\begin{align*}
x_{k+1} = \frac{\exp(\alpha A) x_k}{\| \exp(\alpha A) x_k \|}, k=0,1,2,\ldots
\end{align*}
we obtain that the sequence $x_k$ converges to (a multiple of) the eigenvector $v_1$. The convergence is {\em geometric} with ratio at least
\begin{align*}
\exp(\alpha (|\lambda_2| - \lambda_1)),
\end{align*}
where $\lambda_2$ is the second dominant eigenvalue.

\subsection{Analysis in the general case}

In the general case, the fully exponentiated power method gives an exponential speedup in the rate by which the sinusoid of the angle between the iterates and the dominant eigenvector goes to zero. \citep[Chapter 7]{Arbenz} shows that in the simple power method the sinusoid of this angle (between the iterates and the dominant eigenvector) converges to zero geometrically with a rate equal to
\begin{align*}
\left| \frac{\lambda_2}{\lambda_1} \right|.
\end{align*}
In particular, it is shown that:

\begin{proposition}
\label{arbenzproposition}
Let the eigenvalues of the $n \times n$ matrix $A$ be arranged such that $|\lambda_1| > |\lambda_2| \geq \cdots \geq |\lambda_n|$. Furthermore, Let $u_1$ and $v_1$ be the right and left eigenvectors of $A$ corresponding to $\lambda_1$, respectively. Then the sequence of vectors generated by the power iteration method converges to $u_1$ in the sense that
\begin{align*}
\sin(\measuredangle(x_k, u_1)) \leq c \left| \frac{\lambda_2}{\lambda_1} \right|^k
\end{align*}
provided $v_1^* x_0 \neq 0$.
\end{proposition}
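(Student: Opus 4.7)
The plan is to work with the unnormalized iterates $A^k x_0$, since the normalization step in the power method rescales by a positive scalar and therefore does not affect $\sin(\measuredangle(x_k, u_1))$. The key idea is a spectral splitting of $\mathbb{C}^n$ into the dominant direction and an invariant complement on which $A$ has strictly smaller spectral radius.

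First, I would invoke the Jordan decomposition to write $\mathbb{C}^n = \operatorname{span}(u_1) \oplus V$, where $V$ is the $A$-invariant subspace formed by the generalized eigenspaces of $\lambda_2, \ldots, \lambda_n$. Because $|\lambda_1| > |\lambda_j|$ for $j \geq 2$, the eigenvalue $\lambda_1$ is simple, and the left eigenvector $v_1$ annihilates $V$ (the standard biorthogonality between left and right eigenvectors of distinct eigenvalues, extended to generalized eigenspaces). Normalizing so that $v_1^* u_1 = 1$, I decompose $x_0 = \alpha u_1 + w_0$ with $w_0 \in V$; the hypothesis $v_1^* x_0 \neq 0$ then translates to $\alpha = v_1^* x_0 \neq 0$.

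Second, I would exploit the $A$-invariance of each summand, which yields
\[
A^k x_0 \;=\; \alpha \lambda_1^k u_1 \,+\, A^k w_0.
\]
Since the spectral radius of $A$ restricted to $V$ equals $|\lambda_2|$, we have $\|A^k w_0\| \leq C |\lambda_2|^k$ for a suitable constant $C$. Using the identity $\sin(\measuredangle(y, u_1)) = \|(I - u_1 u_1^*)\, y\|/\|y\|$ with $u_1$ of unit norm, the numerator is exactly $\|(I - u_1 u_1^*) A^k w_0\| \leq \|A^k w_0\|$, because the contribution of the dominant term $\alpha \lambda_1^k u_1$ vanishes after orthogonal projection away from $u_1$; meanwhile the denominator satisfies $\|A^k x_0\| \geq |\alpha|\,|\lambda_1|^k - \|A^k w_0\| \geq \tfrac{1}{2}|\alpha|\,|\lambda_1|^k$ for $k$ sufficiently large. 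Dividing and absorbing constants gives
\[
\sin(\measuredangle(A^k x_0, u_1)) \;\leq\; \frac{2C}{|\alpha|} \left|\frac{\lambda_2}{\lambda_1}\right|^k,
\]
and enlarging $c$ to cover the finitely many small-$k$ cases yields the stated bound.

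The main obstacle, and really the only non-routine point, is the potential presence of nontrivial Jordan blocks at $\lambda_2$, which a priori introduces a polynomial-in-$k$ prefactor in $\|A^k w_0\|$. To recover a clean geometric bound exactly at the rate $|\lambda_2/\lambda_1|^k$, one either assumes $\lambda_2$ is semisimple, or replaces $|\lambda_2|$ by any slightly larger quantity and absorbs the polynomial factor into $c$; in the cited textbook treatment this subtlety is suppressed inside the constant. Everything else is triangle-inequality bookkeeping on top of the spectral splitting.
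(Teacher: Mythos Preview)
Your argument is correct and is essentially the standard textbook proof. Note, however, that the paper does not actually supply a proof of this proposition: it is quoted verbatim from \citep[Chapter 7]{Arbenz}, and the paper explicitly omits the proof of its own Theorem 1 on the ground that it is ``very similar to the proof of Proposition~\ref{arbenzproposition} shown in \citep[Chapter 7]{Arbenz}.'' The only argument the paper itself gives is the diagonalizable-case warmup in Section~\ref{exponentiatedpowermethod}.2, which expands $x_0$ in an eigenbasis and factors out $\lambda_1^k$; your proof is the natural extension of that computation to the non-diagonalizable case, replacing the eigenbasis expansion by the spectral splitting $\mathbb{C}^n = \operatorname{span}(u_1) \oplus V$ coming from the Jordan form. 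So there is no meaningful methodological difference to report: you and the cited reference are doing the same thing, and the paper itself does nothing beyond the diagonalizable sketch.

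Your closing remark about nontrivial Jordan blocks at $\lambda_2$ is well taken and worth emphasizing: with a defective $\lambda_2$ one genuinely has $\|A^k w_0\| \asymp k^{m-1}|\lambda_2|^k$, and no finite constant $c$ can absorb the polynomial while keeping the exact base $|\lambda_2/\lambda_1|$. The proposition as stated therefore tacitly assumes semisimplicity of the eigenvalues on the circle $|\lambda|=|\lambda_2|$ (or else should be read with $|\lambda_2|$ replaced by any $|\lambda_2|+\varepsilon$). This is a standard elision in the source text rather than a flaw in your reasoning.
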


Our exponentiated power iteration exponentially increases the convergence rate and this increase depends on the learning rate: As the learning rate $\alpha > 0$ increases the convergence rate increases likewise. Our main result in this direction is the following theorem, omitting the proof, which is very similar to the proof of Proposition \ref{arbenzproposition} shown in \citep[Chapter 7]{Arbenz}.

\begin{theorem}
Let the eigenvalues of the $n \times n$ matrix $A$ be arranged such that $\lambda_1 > |\lambda_2| \geq \cdots \geq |\lambda_n|$. Furthermore, Let $u_1$ and $v_1$ be the right and left eigenvectors of $A$ corresponding to $\lambda_1$, respectively. Then the sequence of vectors generated by the exponentiated power iteration method with learning rate $\alpha > 0$ converges to $u_1$ in the sense that
\begin{align*}
\sin(\measuredangle(x_k, u_1)) \leq c (\exp(\alpha (|\lambda_2| - \lambda_1))^k
\end{align*}
provided $v_1^* x_0 \neq 0$.
\end{theorem}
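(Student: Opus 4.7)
The plan is to reduce the theorem to Proposition \ref{arbenzproposition} by observing that the exponentiated power iteration on $A$ with learning rate $\alpha$ is identical to the simple power iteration on the matrix $B := \exp(\alpha A)$. Once this reduction is in place, the bound becomes a direct application of Proposition \ref{arbenzproposition} to $B$, and the work reduces to tracking how the spectral data of $A$ passes through the matrix exponential.

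First I would record the spectral correspondence. Using the Schur decomposition $A = Q T Q^*$ with $T$ upper triangular, one gets $B = Q \exp(\alpha T) Q^*$, where $\exp(\alpha T)$ is upper triangular with diagonal entries $\exp(\alpha \lambda_1), \ldots, \exp(\alpha \lambda_n)$. So the eigenvalues of $B$ are $\mu_i := \exp(\alpha \lambda_i)$. From $A u_1 = \lambda_1 u_1$ one derives $A^m u_1 = \lambda_1^m u_1$ for every $m$, hence by the Taylor series $B u_1 = \exp(\alpha \lambda_1) u_1$; the same argument applied to $A^*$ and $v_1$ gives $B^* v_1 = \exp(\alpha \lambda_1) v_1$. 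Moreover, for $j > 1$ the hypothesis $\lambda_1 > |\lambda_j|$ together with $\lambda_1 > 0$ yields $\mathrm{Re}(\lambda_j) \leq |\lambda_j| \leq |\lambda_2| < \lambda_1$, so $|\mu_j| = \exp(\alpha \mathrm{Re}(\lambda_j)) \leq \exp(\alpha |\lambda_2|) < \exp(\alpha \lambda_1) = \mu_1$. Thus $\mu_1$ is the strict dominant eigenvalue of $B$ with right/left eigenvectors $u_1, v_1$, and the non-degeneracy hypothesis $v_1^* x_0 \neq 0$ transfers verbatim.

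Next I would apply Proposition \ref{arbenzproposition} to $B$. Letting $\mu_{(2)}$ denote the second-dominant eigenvalue of $B$, the proposition produces a constant $c > 0$ with $\sin(\measuredangle(x_k, u_1)) \leq c \, |\mu_{(2)}/\mu_1|^k$. The estimate $|\mu_{(2)}| \leq \exp(\alpha |\lambda_2|)$ established above then gives $|\mu_{(2)}/\mu_1| \leq \exp(\alpha(|\lambda_2| - \lambda_1))$, which is exactly the stated bound. The only potential obstacle is the non-diagonalizable case, but this is absorbed by working in the Schur basis rather than with an eigenbasis: Proposition \ref{arbenzproposition} is already assumed to cover it, and our reduction preserves that generality since $\exp(\alpha \cdot)$ commutes with similarity transformations and sends upper-triangular matrices to upper-triangular matrices with the exponentiated spectrum on the diagonal, so the same bookkeeping that underlies Proposition \ref{arbenzproposition} for $A$ applies unchanged to $B$.
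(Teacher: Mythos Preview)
Your proposal is correct and matches the paper's intent: the paper omits the proof entirely, stating only that it ``is very similar to the proof of Proposition~\ref{arbenzproposition},'' and your reduction---recognizing that exponentiated power iteration on $A$ is ordinary power iteration on $B=\exp(\alpha A)$ and then invoking Proposition~\ref{arbenzproposition} for $B$---is exactly that argument made explicit. Your care in bounding $|\mu_{(2)}|\le \exp(\alpha|\lambda_2|)$ (noting that the second-dominant eigenvalue of $B$ need not be $\exp(\alpha\lambda_2)$, since the ordering by modulus for $B$ depends on $\mathrm{Re}(\lambda_j)$ rather than $|\lambda_j|$) is a detail the paper glosses over.
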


\subsection{Truncating the exponential series}

The matrix exponential can rarely be computed exactly and it is typically approximated using a Pad\'e approximant. Computing a Pad\'e approximant can be expensive requiring, for example, a matrix inversion operation. If the goal is to iterate only a few times to compute an approximate dominant eigenvector, the decision whether to invoke the Pad\'e approximation over the simple power iteration method can be analytically complex. A conceptually and analytically simpler approach is to {\em truncate the exponential series}. Using the first $m$ terms of the exponential series to iterate on, that is, using the power iteration method with operand matrix
\begin{align*}
\bar{A} = \sum_{k = 0}^m \frac{1}{k!} (\alpha A)^k
\end{align*}
it is easily seen using the previous convergence analysis that the convergence rate is geometric with ratio
\begin{align*}
\left| \frac{\sum_{k=0}^m \frac{1}{k!} (\alpha \lambda_2)^k}{\sum_{k=0}^m \frac{1}{k!} (\alpha \lambda_1)^k} \right| = \left| \frac{\Gamma (m+1, \alpha \lambda_2)}{\Gamma (m+1, \alpha \lambda_1)} \right| \exp(\alpha(|\lambda_2| - \lambda_1)
\end{align*}
where $| \cdot |$ is the absolute value, $\Gamma(\cdot, \cdot)$ is the upper incomplete gamma function, and $\lambda_1 > 0, \lambda_2$ are the dominant and second dominant eigenvalues. Therefore, even just a second order approximation ($m=2$) can give a significant acceleration in the power iteration method especially as $\alpha$ increases.

\section{Algorithmically boosting non-convergent fixed-point iterators}
\label{HedgeBoosting}

We have previously claimed that algorithmically powered boosting can convert a non-convergent map to one that converges. In this section, we add rigor to the previous discussion on this matter. Toward making this important point, we consider a symmetric zero-sum game, that is, a symmetric bimatrix game $(C, C^T)$ such that the payoff matrix $C$ is antisymmetric (in that $C^T = - C$). We aim to compute or approximate a Nash equilibrium in such a game by repeatedly applying Hedge starting from an interior strategy $X \in \mathbb{\mathring{X}}(C)$. After preliminary results, we prove that iterated Hedge fails to converge in this setting. We then prove that averaging iterated Hedge indeed converges. In particular, we give a fully polynomial time approximation scheme for computing an $\epsilon$-approximate symmetric Nash equilibriums strategy. Our proof techniques in this vein are interesting and novel. In Section \ref{DoubleExponentiation}, we leverage our proof techniques to analyze a doubly exponentiated Hedge map.


\subsection{Preliminary properties of Hedge}

Let us repeat the Hedge map for convenience:
\begin{align}
T_i(X) = X(i) \frac{\exp \left\{ \alpha (CX)_i \right\}}{\sum_{j=1}^n X(j) \exp\{\alpha (CX)_j \}}, \quad i = 1, \ldots, n.\label{main_exp}
\end{align}
In this section, $\mathcal{C}(X)$ denotes the carrier of $X$, that is, the pure strategies that support $X$. Furthermore, $\mathbb{\hat{C}}$ denotes the class of payoff matrices $C$ whose entries lie in the range $[0, 1]$.

\subsubsection{Relative entropy (or Kullback-Leibler divergence)}

Our analysis of Hedges relies on the relative entropy function between probability distributions (also called {\em Kullback-Leibler divergence}). The relative entropy between the $n \times 1$ probability vectors $P > 0$ (that is, for all $i = 1, \ldots, n$, $P(i) > 0$) and $Q > 0$ is given by 
\begin{align*}
RE(P, Q) \doteq \sum_{i=1}^n P(i) \ln \frac{P(i)}{Q(i)}.
\end{align*}
However, this definition can be relaxed: The relative entropy between $n \times 1$ probability vectors $P$ and $Q$ such that, given $P$, for all $Q \in \{ \mathcal{Q} \in \mathbb{X} | \mathcal{C}(P) \subset \mathcal{C}(\mathcal{Q}) \}$, where $\mathbb{X}$ is a probability simplex of appropriate dimension, is
\begin{align*}
RE(P, Q) \doteq \sum_{i \in \mathcal{C}(P)} P(i) \ln \frac{P(i)}{Q(i)}.
\end{align*}
We note the well-known properties of the relative entropy \cite[p.96]{Weibull} that {\em (i)} $RE(P, Q) \geq 0$, {\em (ii)} $RE(P, Q) \geq \| P - Q \|^2$, where $\| \cdot \|$ is the Euclidean distance, {\em (iii)} $RE(P, P) = 0$, and {\em (iv)} $RE(P, Q) = 0$ iff $P = Q$. Note {\em (i)} follows from {\em (ii)} and {\em (iv)} follows from {\em (ii)} and {\em (iii)}.

\subsubsection{The convexity lemma}

The following lemma generalizes \cite[Lemma 2]{FreundSchapire2}.

\begin{lemma}
\label{convexity_lemma}
Let $T$ be as in \eqref{main_exp}. Then
\begin{align*}
\forall X \in \mathbb{\mathring{X}}(C) \mbox{ } \forall Y \in \mathbb{X}(C) : RE(Y, T(X)) \mbox{ is a convex function of }\alpha.
\end{align*}
Furthermore, unless $X$ is a fixed point, $RE(Y, T(X))$ is a strictly convex function of $\alpha$.
\end{lemma}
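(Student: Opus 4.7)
The plan is to expand $RE(Y, T(X))$ explicitly as a function of $\alpha$ and reduce the convexity claim to a standard fact about log-moment-generating functions. First I would substitute the definition \eqref{main_exp} of $T_i(X)$ into the relative entropy and use $\sum_i Y(i) = 1$ to collect terms, obtaining
\begin{align*}
RE(Y, T(X)) = \sum_i Y(i) \ln \frac{Y(i)}{X(i)} \;-\; \alpha \, (Y \cdot CX) \;+\; \ln \left( \sum_j X(j) \exp\{\alpha (CX)_j\} \right).
\end{align*}
The first sum is independent of $\alpha$ (with the convention $0 \ln 0 = 0$ to accommodate $Y$ on the boundary of the simplex), and the second term is affine in $\alpha$; both are trivially convex, so the nontrivial piece is the third term.

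Denote $\phi(\alpha) \doteq \ln \sum_j X(j) \exp\{\alpha (CX)_j\}$. This is the cumulant generating function of the discrete random variable that takes value $(CX)_j$ with probability $X(j)$. I would differentiate twice and check that $\phi''(\alpha)$ equals the variance of $(CX)_j$ under the tilted distribution $\mu_j(\alpha) \propto X(j) \exp\{\alpha (CX)_j\}$. Nonnegativity of a variance gives $\phi'' \geq 0$ and hence convexity of $\phi$; summing with the affine and constant components establishes the first assertion of the lemma.

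For the strict convexity claim, the key observation is that $\phi''(\alpha) = 0$ if and only if $(CX)_j$ is constant on the support of $\mu(\alpha)$. Because $X \in \mathbb{\mathring{X}}(C)$, this support equals $\{1,\ldots,n\}$, so the degenerate case forces $(CX)_j$ to take the same value for every $j$. Substituting this condition into \eqref{main_exp} makes every exponential and the normalizer coincide, so that $T_i(X) = X(i)$ for all $i$; conversely, if $T(X) = X$ then the interiority of $X$ forces $\exp\{\alpha (CX)_i\}$ to equal the common normalizer for each $i$, and hence the $(CX)_i$ to be uniform. The main (mild) obstacle is precisely this equivalence between vanishing variance and being a fixed point, but once it is established the strict convexity follows immediately from $\phi'' > 0$.
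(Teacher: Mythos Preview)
Your proof is correct and follows essentially the same route as the paper: both arguments reduce convexity to the nonnegativity of the second derivative of the log-sum-exp term, which the paper phrases via Jensen's inequality and you phrase as the variance of $(CX)_j$ under the tilted distribution, and both identify the degenerate (non-strict) case with constancy of $(CX)_j$ on the support of $X$. Your upfront algebraic decomposition into constant $+$ affine $+$ cumulant-generating-function is a bit tidier than the paper's direct differentiation of the full expression, but the mathematical content is the same.
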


\begin{proof}
We have
{\allowdisplaybreaks
\begin{align*}
\frac{d}{d \alpha}& RE(Y, \hat{X}) = \\
  &= \frac{d}{d \alpha} \left( \sum_{i \in \mathcal{C}(Y)} Y(i) \ln\left( \frac{Y(i)}{\hat{X}(i)} \right) \right)\\
  &= \frac{d}{d \alpha} \left( \sum_{i \in \mathcal{C}(Y)} Y(i) \ln\left( Y(i) 
  \cdot \frac{\sum_{j = 1}^n X(j) \exp\{ \alpha (CX)_j \}}{ X(i) \exp\{ \alpha (CX)_i \}} \right) \right)\\
  &= \frac{d}{d \alpha} \left( \sum_{i \in \mathcal{C}(Y)} Y(i) \ln\left( \frac{\sum_{j = 1}^n X(j) \exp\{ \alpha (CX)_j \}}{X(i)\exp\{ \alpha (CX)_i \}} \right) \right)\\
  &= \sum_{i \in \mathcal{C}(Y)} Y(i) \frac{d}{d \alpha} \left( \ln\left( \frac{\sum_{j = 1}^n X(j) \exp\{ \alpha (CX)_j \}}{X(i)\exp\{ \alpha (CX)_i \}} \right) \right).
\end{align*}}
Furthermore, using $(\cdot)'$ as alternative notation (abbreviation) for $d/d\alpha(\cdot)$,
\begin{align*}
\frac{d}{d \alpha} \left( \ln\left( \frac{\sum_{j = 1}^n X(j) \exp\{ \alpha (CX)_j \}}{X(i)\exp\{ \alpha (CX)_i \}} \right) \right) = \frac{X(i)\exp\{ \alpha (CX)_i \}}{\sum_{j = 1}^n X(j) \exp\{ \alpha (CX)_j \}} \left( \frac{\sum_{j = 1}^n X(j) \exp\{ \alpha (CX)_j \}}{X(i)\exp\{ \alpha (CX)_i \}} \right)'
\end{align*}
and
\begin{align*}
\left( \frac{\sum_{j = 1}^n X(j) \exp\{ \alpha (CX)_j \}}{X(i)\exp\{ \alpha (CX)_i \}} \right)'  &= \frac{\sum_{j = 1}^n X(j) (CX)_j \exp\{ \alpha (CX)_j \} X(i) \exp\{ \alpha (CX)_i \}}{\left( X(i) \exp\{ \alpha (CX)_i \} \right)^2} -\\
  &- \frac{X(i) (CX)_i \sum_{j = 1}^n X(j) \exp\{ \alpha (CX)_j \} \exp\{ \alpha (CX)_i \}}{\left( X(i) \exp\{ \alpha (CX)_i \} \right)^2} =\\
  &= \frac{\sum_{j = 1}^n X(j) (CX)_j \exp\{ \alpha (CX)_j \} \}}{ X(i) \exp\{ \alpha (CX)_i \} } -\\
  &- \frac{(CX)_i \sum_{j = 1}^n X(j) \exp\{ \alpha (CX)_j \} \}}{ X(i) \exp\{ \alpha (CX)_i \} }.
\end{align*}
Therefore,
\begin{align}
\frac{d}{d \alpha} RE(Y, \hat{X}) = \frac{\sum_{j = 1}^n X(j) (CX)_j \exp\{ \alpha (CX)_j \}}{\sum_{j = 1}^n X(j) \exp\{ \alpha (CX)_j \}} - Y \cdot CX.\label{valentine}
\end{align}
Furthermore,
\begin{align*}
\frac{d^2}{d \alpha^2} RE(Y, \hat{X}) &= \frac{\left( \sum_{j = 1}^n X(j) ((CX)_j)^2 \exp\{ \alpha (CX)_j \} \right) \left( \sum_{j = 1}^n X(j) \exp\{ \alpha (CX)_j \} \right)}{\left( \sum_{j = 1}^n X(j) \exp\{ \alpha (CX)_j \} \right)^2} -\\
  &- \frac{\left( \sum_{j = 1}^n X(j) ((CX)_j) \exp\{ \alpha (CX)_j \} \right)^2}{\left( \sum_{j = 1}^n X(j) \exp\{ \alpha (CX)_j \} \right)^2}.
\end{align*}
Jensen's inequality implies that
\begin{align*}
\frac{\sum_{j = 1}^n X(j) ((CX)_j)^2 \exp\{ \alpha (CX)_j \}}{\sum_{j = 1}^n X(j) \exp\{ \alpha (CX)_j \}} \geq \left( \frac{\sum_{j = 1}^n X(j) ((CX)_j) \exp\{ \alpha (CX)_j \}}{\sum_{j = 1}^n X(j) \exp\{ \alpha (CX)_j \}} \right)^2,
\end{align*}
which is equivalent to the numerator of the second derivative being nonnegative as $X$ is a probability vector. Note that the inequality is strict unless 
\begin{align*}
\forall i, j \in \mathcal{C}(X) : (CX)_i = (CX)_j.
\end{align*}
This completes the proof.
\end{proof}

\subsubsection{A version of the convexity lemma}

The following lemma is an analogue of \cite[Lemma 8.2.1, p. 471]{ConvexAnalysis}.

\begin{lemma}
\label{convexity_lemma_2}
Let $C \in \mathbb{\hat{C}}$. Then, for all $Y \in \mathbb{X}(C)$ and for all $X \in \mathbb{\mathring{X}}(C)$, we have that
\begin{align*}
\forall \alpha > 0 : RE(Y, T(X)) \leq RE(Y, X) - \alpha (Y-X) \cdot CX + \alpha (\exp\{\alpha\} - 1) \bar{C},
\end{align*}
where $\bar{C} > 0$ is a scalar that can be chosen independent of $X$ and $Y$.
\end{lemma}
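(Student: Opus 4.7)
The plan is to compute $RE(Y, T(X)) - RE(Y,X)$ exactly and then bound the excess via the standard convexity-of-exponential trick on the interval $[0,1]$. Concretely, I would first substitute the Hedge update $T_i(X) = X(i)\exp\{\alpha (CX)_i\}/Z(X)$, where $Z(X) \doteq \sum_j X(j)\exp\{\alpha(CX)_j\}$, into the definition of relative entropy. Since $\ln T_i(X) = \ln X(i) + \alpha (CX)_i - \ln Z(X)$, a one-line computation collapses cross terms and yields the closed-form identity
\begin{align*}
RE(Y, T(X)) = RE(Y, X) - \alpha Y \cdot CX + \ln Z(X).
\end{align*}
The entire task then reduces to upper-bounding $\ln Z(X)$ by $\alpha X \cdot CX + \alpha(\exp\{\alpha\}-1)\bar{C}$.

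Next I would invoke the tangent-chord inequality for the convex function $t \mapsto \exp\{\alpha t\}$ on $[0,1]$, namely $\exp\{\alpha t\} \le 1 + (\exp\{\alpha\}-1)\, t$ for $t \in [0,1]$. The hypothesis $C \in \mathbb{\hat{C}}$ ensures that $(CX)_j \in [0,1]$ for every pure strategy $j$ and every $X \in \mathbb{X}(C)$, so the inequality applies pointwise. Weighting by $X(j)$ and summing gives $Z(X) \le 1 + (\exp\{\alpha\}-1)\, X \cdot CX$, whence $\ln Z(X) \le (\exp\{\alpha\}-1)\, X \cdot CX$ by $\ln(1+u) \le u$. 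Combining this with the identity above and rewriting $-\alpha Y\cdot CX = -\alpha(Y-X)\cdot CX - \alpha X\cdot CX$ produces
\begin{align*}
RE(Y, T(X)) \le RE(Y, X) - \alpha (Y-X)\cdot CX + (\exp\{\alpha\}-1-\alpha)\, X \cdot CX.
\end{align*}

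The final step is to absorb the coefficient into the form requested by the lemma. Since the entries of $C$ lie in $[0,1]$, $X \cdot CX \le 1$ uniformly in $X$, so I would choose $\bar{C} = 1$ (any upper bound on $\sup_{X \in \mathbb{X}(C)} X \cdot CX$, for instance $\|C\|_\infty$, would do equally well and is manifestly independent of $X$ and $Y$). It then suffices to verify the scalar inequality $\exp\{\alpha\}-1-\alpha \le \alpha(\exp\{\alpha\}-1)$, which after rearrangement becomes $1 - e^{-\alpha} \le \alpha$, a one-line consequence of $e^{-\alpha} \ge 1 - \alpha$ valid for all $\alpha \ge 0$.

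I don't anticipate a real obstacle here: the computation of $\ln Z$ is routine, the convexity bound is the textbook Freund--Schapire estimate, and the only mildly subtle point is cosmetic, namely matching the exact algebraic form $\alpha(\exp\{\alpha\}-1)\bar{C}$ rather than the tighter $(\exp\{\alpha\}-1-\alpha)\bar{C}$ that the argument naturally produces. The slack ratio $(e^\alpha - 1 - \alpha)/(\alpha(e^\alpha - 1))$ is bounded by $1$ for all $\alpha > 0$, so the loosening is harmless and the stated form is preserved with $\bar{C}$ chosen independently of $X$ and $Y$.
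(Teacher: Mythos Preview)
Your proof is correct, but it follows a different route from the paper. The paper first invokes the convexity lemma (Lemma~\ref{convexity_lemma}) to assert that $\alpha \mapsto RE(Y,T(X))$ is convex, then applies the secant inequality $RE(Y,T(X)) - RE(Y,X) \le \alpha \cdot \frac{d}{d\alpha}RE(Y,T(X))$, bounds the derivative via Jensen's inequality on the denominator $\sum_j X(j)\exp\{\alpha(CX)_j\} \ge \exp\{\alpha X\cdot CX\}$, and finally applies the Freund--Schapire inequality $\exp\{\alpha x\} \le 1 + (\exp\{\alpha\}-1)x$ to the numerator; this leads to $\bar{C} = \max_X \sum_j X(j)(CX)_j^2$. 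You instead compute the exact identity $RE(Y,T(X)) - RE(Y,X) = -\alpha Y\cdot CX + \ln Z(X)$ and bound $\ln Z(X)$ directly, bypassing the convexity lemma and the derivative altogether. Your argument is more elementary and self-contained, and in fact yields the sharper intermediate constant $(\exp\{\alpha\}-1-\alpha)$ before you loosen it to $\alpha(\exp\{\alpha\}-1)$ to match the stated form; the paper's route, on the other hand, makes explicit use of the convexity structure that is also exploited elsewhere (e.g., in the instability lemma).
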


\begin{proof}
Since, by Lemma \ref{convexity_lemma}, $RE(Y, T(X)) - RE(Y, X)$ is a convex function of $\alpha$, we have by the aforementioned secant inequality that, for $\alpha > 0$,
\begin{align}
RE(Y, T(X)) - RE(Y, X) \leq \alpha \left( RE(Y, T(X)) - RE(Y, X) \right)' = \alpha \cdot \frac{d}{d \alpha} RE(Y, T(X)).\label{ooone}
\end{align}
Straight calculus (cf. Lemma \ref{convexity_lemma}) implies that
\begin{align*}
\frac{d}{d \alpha} RE(Y, T(X)) = \frac{\sum_{j = 1}^n X(j) (CX)_j \exp\{ \alpha (CX)_j \}}{\sum_{j = 1}^n X(j) \exp\{ \alpha (CX)_j \}} - Y \cdot CX.
\end{align*}
Using Jensen's inequality in the previous expression, we obtain 
\begin{align}
\frac{d}{d \alpha} RE(Y, T(X)) \leq \frac{\sum_{j = 1}^n X(j) (CX)_j \exp\{ \alpha (CX)_j \}}{\exp\{ \alpha X \cdot CX \}} - Y \cdot CX.\label{vbvbvb}
\end{align}
Note now that
\begin{align}
\exp\{ \alpha x \} \leq 1 + (\exp\{ \alpha \} - 1) x, x \in [0, 1],\label{freund_schapire_in}
\end{align}
an inequality used in \cite[Lemma 2]{FreundSchapire2}. Using $C \in \mathbb{\hat{C}}$, \eqref{vbvbvb} and \eqref{freund_schapire_in} imply that
\begin{align*}
\frac{d}{d \alpha} RE(Y, T(X)) \leq \frac{X \cdot CX}{\exp\{ \alpha X \cdot CX \}} - Y \cdot CX + \left( \exp\{ \alpha \} - 1 \right) \frac{\sum_{j=1}^n X(j) (CX)_j^2}{\exp\{ \alpha X \cdot CX \}}
\end{align*}
and since $\exp\{\alpha X \cdot CX\} \geq 1$ (again by the assumption that $C \in \mathbb{\hat{C}}$), we have
\begin{align*}
\frac{d}{d \alpha} RE(Y, T(X)) \leq X \cdot CX - Y \cdot CX + \left( \exp\{ \alpha \} - 1 \right) \sum_{j=1}^n X(j) (CX)_j^2.
\end{align*}
Choosing $\bar{C} = \max \left\{\sum X(j) (CX)_j^2 \right\}$ and combining with \eqref{ooone} yields the lemma.
\end{proof}

\subsubsection{An instability lemma}

The following lemma is crucial in deriving divergence results on multiplicative weights in general. We can prove it in two ways, one invoking the aforementioned convexity lemma and the other by simply invoking Jensen's inequality. We show both proofs.

\begin{lemma}
\label{cool_hedge_2}
Let $Y, X \in \mathbb{X}(C)$ such that $\mathcal{C}(Y) \subseteq \mathcal{C}(X)$ and such that $Y \neq X$. If $X$ is not a fixed point, then
\begin{align*}
X \cdot CX - Y \cdot CX \geq 0 \Rightarrow \forall \alpha > 0 : RE(Y, T(X)) - RE(Y, X) > 0.
\end{align*}
\end{lemma}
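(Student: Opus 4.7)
The plan is to give two short proofs, mirroring the two routes the author hints at: one via Lemma~\ref{convexity_lemma} and one via a direct Jensen-type calculation. Both pivot on the same observation, namely that the non-fixed-point hypothesis for $X$ is equivalent to $(CX)_i$ being non-constant on $\mathcal{C}(X)$: indeed $T_i(X)/X(i) \propto \exp\{\alpha (CX)_i\}$ on the carrier, so $T(X)=X$ iff the $(CX)_i$ coincide there.

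\textbf{Route via the convexity lemma.} I would view $g(\alpha) \doteq RE(Y, T(X))$ as a function of $\alpha$ and first record the boundary value $g(0) = RE(Y, X)$, since $T(X) = X$ at $\alpha = 0$. Next, I would specialize formula \eqref{valentine} at $\alpha = 0$ to obtain $g'(0) = X \cdot CX - Y \cdot CX \geq 0$ by hypothesis. Since $X$ is not a fixed point, Lemma~\ref{convexity_lemma} tells us that $g$ is strictly convex in $\alpha$, so $g'$ is strictly increasing; consequently $g'(\alpha) > g'(0) \geq 0$ for every $\alpha > 0$, and therefore $g(\alpha) > g(0) = RE(Y, X)$, which is the desired conclusion.

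\textbf{Route via Jensen.} Alternatively, I would expand directly:
\[
RE(Y, T(X)) - RE(Y, X) = \sum_{i \in \mathcal{C}(Y)} Y(i) \ln \frac{X(i)}{T_i(X)} = \ln\!\Bigl(\sum_{j \in \mathcal{C}(X)} X(j) \exp\{\alpha (CX)_j\}\Bigr) - \alpha\, Y \cdot CX,
\]
where the cancellation of the $X(i)\exp\{\alpha(CX)_i\}$ factors is legitimate because $\mathcal{C}(Y) \subseteq \mathcal{C}(X)$ and $\sum_i Y(i) = 1$. Applying Jensen's inequality to the concave function $\ln$ against the weights $X(\cdot)$ on $\mathcal{C}(X)$ lower bounds the logarithm by $\alpha\, X \cdot CX$, and the bound is \emph{strict} precisely because $(CX)_j$ is non-constant on $\mathcal{C}(X)$, i.e., because $X$ is not a fixed point. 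Combining with $X \cdot CX - Y \cdot CX \geq 0$ yields
\[
RE(Y, T(X)) - RE(Y, X) > \alpha\,(X \cdot CX - Y \cdot CX) \geq 0.
\]

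\textbf{Expected obstacle.} The only delicate point in either route is justifying the strict inequality from the non-fixed-point assumption. I would want to make explicit the equivalence ``$X$ is a fixed point of $T$ iff $(CX)_i$ is constant on $\mathcal{C}(X)$,'' because it is this equivalence that lets Lemma~\ref{convexity_lemma}'s strict-convexity clause (or, equivalently, the strict Jensen inequality) fire exactly under the hypothesis of the lemma. Everything else is bookkeeping.
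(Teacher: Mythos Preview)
Your proposal is correct and follows essentially the same two routes as the paper: the paper's first proof is exactly your Jensen expansion (with the same strictness coming from non-constancy of $(CX)_i$ on $\mathcal{C}(X)$), and the paper's second proof is exactly your convexity-lemma argument via \eqref{valentine} at $\alpha=0$. Your explicit articulation of the equivalence ``$X$ is a fixed point iff $(CX)_i$ is constant on $\mathcal{C}(X)$'' is a helpful addition that the paper leaves implicit.
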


\begin{proof}[First proof of Lemma \ref{cool_hedge_2}]
Let $\hat{X} = T(X)$.We have, by Jensen's inequality, that
{\allowdisplaybreaks
\begin{align}
RE(Y, \hat{X}) - RE(Y, X) &= \sum_{i \in \mathcal{C}(Y)} Y(i) \ln \frac{Y(i)}{\hat{X}(i)} - \sum_{i \in \mathcal{C}(Y)}  Y(i) \ln \frac{Y(i)}{X(i)}\notag\\
                       &= - \sum_{i \in \mathcal{C}(Y)} Y(i) \ln \hat{X}(i) + \sum_{i \in \mathcal{C}(Y)} Y(i) \ln X(i)\notag\\
                       &= \sum_{i \in \mathcal{C}(Y)} Y(i)  \ln \frac{X(i)}{\hat{X}(i)}\notag\\
                       &= \sum_{i \in \mathcal{C}(Y)} Y(i)  \ln \left( \frac{\sum_{j=1}^n X(j) \exp\{ \alpha E_j \cdot CX \}}{\exp\{\alpha E_i \cdot CX\}} \right)\notag\\
                       &= \ln \left( \sum_{j=1}^n X(j) \exp\{ \alpha E_j \cdot CX \} \right) - \sum_{i \in \mathcal{C}(Y)} Y(i)  \ln\left(\exp\{ \alpha E_i \cdot CX \} \right)\notag\\
                       &= \ln \left( \sum_{j=1}^n X(j) \exp\{ \alpha E_j \cdot CX \} \right) - \alpha \sum_{i \in \mathcal{C}(Y)} Y(i)  (E_i \cdot CX)\notag\\
                       &\geq \alpha  \sum_{j=1}^n X(j) (E_j \cdot CX) - \alpha \sum_{i \in \mathcal{C}(Y)} Y(i)  (E_i \cdot CX)\notag\\
                       &= \alpha (X \cdot CX - Y \cdot CX).\notag
\end{align}
}
Therefore, 
\begin{align*}
X \cdot CX - Y \cdot CX \geq 0 \Rightarrow \forall \alpha > 0: RE(Y, \hat{X}) - RE(Y, X) \geq 0.
\end{align*}
If $\max_{i \in \mathcal{C}(X)} \{ (CX)_i \} > \min_{i \in \mathcal{C}(X)} \{ (CX)_i \}$, since Jensen's inequality is strict,
\begin{align*}
X \cdot CX - Y \cdot CX \geq 0 \Rightarrow \forall \alpha > 0: RE(Y, \hat{X}) - RE(Y, X) > 0.
\end{align*}
This completes the proof.
\end{proof}

\begin{proof}[Second proof of Lemma \ref{cool_hedge_2}]
By the convexity lemma (Lemma \ref{convexity_lemma}), unless $X$ is a fixed point, $RE(Y, \hat{X})$ is strictly convex. \eqref{valentine} implies that
\begin{align*}
\left. \frac{d}{d \alpha} RE(Y, \hat{X}) \right|_{\alpha = 0} = (X - Y) \cdot CX \geq 0.
\end{align*}
Noting that $\left. RE(Y, \hat{X}) \right|_{\alpha = 0} = RE(Y, X)$ completes the proof.
\end{proof}

\subsection{Divergence of iterated Hedge}

Let us now get to the proof that iterated Hedge diverges, in particular, in symmetric zero-sum games equipped with an interior equilibrium. Given an antisymmetric matrix $C$ equipped with an interior equilibrium, say $X^*$, it is simple to show that $X^*$ satisfies the relation
\begin{align*}
\forall X \in \mathbb{X}(C) : (X^* - X) \cdot CX = 0.
\end{align*}
As an example, consider the {\em rock-paper-scissors game}, which is a zero-sum symmetric bimatrix game with payoff matrix
\begin{align*}
C = \left(\begin{array}{ r  r  r }
0 & -1 & 1\\
1 & 0 & -1\\
-1 & 1 & 0\\
\end{array}\right).
\end{align*}
$C$ is anti-symmetric, that is, $C^T = -C$, therefore, for all $X$, $X \cdot CX = 0$. Furthermore, $X^* = (1/3, 1/3, 1/3)$ is the unique equilibrium strategy, implying after straight algebra that, for all $X \in \mathbb{X}(C)$, $(X^* - X) \cdot CX = 0$. Lemma \ref{cool_hedge_2} implies that starting anywhere in the interior of $\mathbb{X}(C)$ other than the uniform strategy (which is the equilibrium strategy), under any sequence of positive learning rates, the relative entropy distance between $X^*$ and $T^k(X^0)$ diverges to $\infty$ as $k \rightarrow \infty$.

\subsection{Precise bounds on equilibrium approximation}

Let us now prove that, in sharp contrast, averaging iterated Hedge converges. In fact, we give an equilibrium fully polynomial time approximation scheme: Given any desired equilibrium approximation error $\epsilon$, we compute a fixed learning rate $\alpha$ such that the average of iterated Hedge converges to an $\epsilon$-approximate Nash equilibrium of the corresponding symmetric zero-sum game.

\begin{lemma}
\label{convexity_lemma_normalized}
Let $C \in \mathbb{\hat{C}}$. Then, for all $Y \in \mathbb{X}(C)$ and for all $X \in \mathbb{\mathring{X}}(C)$, we have that
\begin{align*}
\forall \alpha > 0 : RE(Y, T(X)) \leq RE(Y, X) - \alpha (Y-X) \cdot CX + \alpha (\exp\{\alpha\} - 1).
\end{align*}
\end{lemma}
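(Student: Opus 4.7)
The plan is to obtain this statement as a sharpening of Lemma \ref{convexity_lemma_2} by tracking the constant $\bar{C}$ more carefully under the explicit hypothesis $C \in \mathbb{\hat{C}}$. Recall that in the proof of Lemma \ref{convexity_lemma_2} the quantity $\bar{C}$ was introduced solely to upper-bound $\sum_{j=1}^n X(j) (CX)_j^2$ uniformly in $X$; so our entire task here is to verify that under $C \in \mathbb{\hat{C}}$ this quantity is itself at most $1$, whence the constant in front of $(\exp\{\alpha\}-1)$ can be taken to be $1$ as claimed.

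First, I would replay the derivation of Lemma \ref{convexity_lemma_2} up to the inequality
\begin{align*}
\frac{d}{d\alpha} RE(Y,T(X)) \leq X \cdot CX - Y \cdot CX + (\exp\{\alpha\}-1) \sum_{j=1}^n X(j)(CX)_j^2,
\end{align*}
which is obtained via Jensen's inequality together with the elementary bound $\exp\{\alpha x\} \leq 1 + (\exp\{\alpha\}-1) x$ for $x \in [0,1]$, plus the observation $\exp\{\alpha X \cdot CX\} \geq 1$. Nothing here needs to be redone; I would simply cite that derivation.

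The key step is then the bound on $\sum_{j=1}^n X(j)(CX)_j^2$. Since $C \in \mathbb{\hat{C}}$ has entries in $[0,1]$ and $X \in \mathbb{X}(C)$ is a probability vector, each component $(CX)_j = \sum_i C_{ji} X(i)$ lies in $[0,1]$ as a convex combination of numbers in $[0,1]$. Therefore $(CX)_j^2 \leq (CX)_j$ for every $j$, and consequently
\begin{align*}
\sum_{j=1}^n X(j)(CX)_j^2 \;\leq\; \sum_{j=1}^n X(j)(CX)_j \;=\; X \cdot CX \;\leq\; 1,
\end{align*}
where the last inequality uses again that $X$ is a probability vector and $CX \in [0,1]^n$. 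Thus the constant $\bar{C}$ appearing in Lemma \ref{convexity_lemma_2} may be chosen equal to $1$ under the standing hypothesis.

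Finally, I would combine this with the secant inequality step \eqref{ooone}, which from Lemma \ref{convexity_lemma} (convexity of $\alpha \mapsto RE(Y, T(X))$) gives
\begin{align*}
RE(Y, T(X)) - RE(Y, X) \leq \alpha \cdot \frac{d}{d\alpha} RE(Y, T(X)),
\end{align*}
to conclude the desired inequality. I do not anticipate a serious obstacle: the only point that requires care is confirming that $(CX)_j \in [0,1]$ componentwise (so that the pointwise bound $(CX)_j^2 \leq (CX)_j$ is legitimate), and this is immediate from $C \in \mathbb{\hat{C}}$ and $X$ being a probability vector. The lemma is therefore essentially a specialization of Lemma \ref{convexity_lemma_2}, made sharper by exploiting the stated range assumption on the entries of $C$.
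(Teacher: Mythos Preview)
Your proposal is correct and matches the paper's own proof essentially line for line: the paper likewise observes that $C \in \mathbb{\hat{C}}$ gives $\sum_{j} X(j)(CX)_j^2 \leq \sum_{j} X(j)(CX)_j \leq 1$, plugs this into the derivative bound from Lemma~\ref{convexity_lemma_2}, and then invokes \eqref{ooone}. There is nothing to add.
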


\begin{proof}
Using the assumption $C \in \mathbb{\hat{C}}$, we have
\begin{align*}
\sum_{j=1}^n X(j) (CX)_j^2 \leq \sum_{j=1}^n X(j) (CX)_j \leq 1.
\end{align*}
Therefore,
\begin{align*}
\frac{d}{d \alpha} RE(Y, T(X)) \leq X \cdot CX - Y \cdot CX + (\exp\{ \alpha \} - 1).
\end{align*}
Combining with \eqref{ooone} yields the lemma.
\end{proof}

\begin{lemma}
\label{approximation_lemma_corollary}
Let $C \in \mathbb{\hat{C}}$, $X^k \equiv T^k(X^0)$, $X^0 \in \mathbb{\mathring{X}}(C)$, and assume $\alpha > 0$ is held constant. Then,
\begin{align}
\forall \theta > 0 \mbox{ } \forall Y \in \mathbb{X}(C) : \frac{1}{K+1} \sum_{k=0}^K (Y - X^k) \cdot CX^k \leq (\exp\{\alpha\}-1) + \theta\label{limeisgood}
\end{align}
where $K \geq \floor*{RE(Y, X^0)/(\alpha \theta)}$. If $X^0$ is the uniform strategy, then \eqref{limeisgood} holds after $\floor*{\ln(n) / (\alpha \theta)}$ iterations and continues to hold thereafter.
\end{lemma}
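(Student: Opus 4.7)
The plan is to apply Lemma \ref{convexity_lemma_normalized} iteratively and telescope. Concretely, instantiating that lemma at $X = X^k$ for each $k \geq 0$ gives
\begin{align*}
RE(Y, X^{k+1}) \leq RE(Y, X^k) - \alpha (Y - X^k) \cdot CX^k + \alpha (\exp\{\alpha\} - 1).
\end{align*}
Summing this inequality over $k = 0, 1, \ldots, K$ causes the relative-entropy terms on the right-hand side to telescope, leaving
\begin{align*}
RE(Y, X^{K+1}) \leq RE(Y, X^0) - \alpha \sum_{k=0}^K (Y - X^k) \cdot CX^k + (K+1) \alpha (\exp\{\alpha\} - 1).
\end{align*}

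Next, I would use the nonnegativity of relative entropy (property \emph{(i)} listed for $RE$) to drop $RE(Y, X^{K+1})$ on the left. Rearranging and dividing both sides by $(K+1)\alpha > 0$ yields the key estimate
\begin{align*}
\frac{1}{K+1} \sum_{k=0}^K (Y - X^k) \cdot CX^k \leq \frac{RE(Y, X^0)}{(K+1)\alpha} + (\exp\{\alpha\} - 1).
\end{align*}
It then suffices to force the first term on the right to be at most $\theta$. This holds whenever $K + 1 \geq RE(Y, X^0)/(\alpha \theta)$, which is guaranteed by the hypothesis $K \geq \floor*{RE(Y, X^0)/(\alpha \theta)}$. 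Since enlarging $K$ only makes the fraction $RE(Y, X^0)/((K+1)\alpha)$ smaller, the bound persists for every larger iteration count.

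For the uniform-start claim I would just observe that if $X^0(i) = 1/n$ for every $i$, then
\begin{align*}
RE(Y, X^0) = \sum_{i} Y(i) \ln\!\left( \frac{Y(i)}{1/n} \right) = \ln n + \sum_i Y(i) \ln Y(i) \leq \ln n,
\end{align*}
since the entropy term is nonpositive. Substituting this upper bound into the threshold from the first part yields the stated iteration count $\floor*{\ln(n)/(\alpha\theta)}$. There is no substantive obstacle here; the only subtlety is making sure the floor function and the ``$K+1$'' normalization are handled so that the required inequality $K+1 \geq RE(Y,X^0)/(\alpha\theta)$ is indeed implied by the stated hypothesis, which it is because adding $1$ to any floor recovers the original real number as an upper bound.
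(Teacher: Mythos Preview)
Your proof is correct and uses the same core ingredients as the paper: telescope the inequality of Lemma~\ref{convexity_lemma_normalized}, invoke nonnegativity of $RE$, and bound $RE(Y,X^0)/((K+1)\alpha)$ by $\theta$ via the floor condition. The only cosmetic difference is that the paper wraps the identical telescoping computation in a proof by contradiction (assuming \eqref{limeisgood} fails and deriving $RE(Y,X^0)\geq (K+1)\alpha\theta$), whereas you argue directly; your presentation is in fact slightly cleaner, and your explicit entropy bound $RE(Y,X^0)\leq \ln n$ for the uniform start is more transparent than the paper's appeal to convexity being maximized at vertices.
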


\begin{proof}
Assume for the sake of contradiction that \eqref{limeisgood} does not hold, that is,
\begin{align}
\frac{1}{K+1} \sum_{k=0}^K (Y - X^k) \cdot CX^k > (\exp\{\alpha\}-1) + \theta.\label{red2}
\end{align}
Invoking Lemma \ref{convexity_lemma_normalized},
\begin{align*}
RE(Y, X^{k+1}) \leq RE(Y, X^k) - \alpha (Y-X^k) \cdot CX^k + \alpha (\exp\{\alpha\} - 1).
\end{align*}
Summing over $k = 0, \ldots, K$, we obtain
\begin{align*}
RE(Y, X^{K+1}) \leq RE(Y, X^0) - \alpha \sum_{k=0}^K (Y-X^k) \cdot CX^k + (K+1) \alpha (\exp\{\alpha\} - 1)
\end{align*}
and, therefore,
\begin{align}
\frac{RE(Y, X^{K+1})}{K+1} \leq \frac{RE(Y, X^0)}{K+1} - \frac{\alpha}{K+1} \sum_{k=0}^K (Y-X^k) \cdot CX^k + \alpha (\exp\{\alpha\} - 1).\label{lalala2}
\end{align}
Substituting then \eqref{red2} in \eqref{lalala2} we obtain
\begin{align*}
\frac{RE(Y, X^{K+1})}{K+1} \leq \frac{RE(Y, X^0)}{K+1} - \alpha \theta,
\end{align*}
which implies that
\begin{align*}
RE(Y, X^{K+1}) \leq RE(Y, X^0) - (K+1) \alpha \theta
\end{align*}
and, therefore, that
\begin{align*}
RE(Y, X^0) \geq (K+1) \alpha \theta.
\end{align*}
But this contradicts the previous definition of $K$ and completes the proof.\\

The second part of the lemma is implied from the observation that a convex function is maximized at the boundary and, in our particular case, the vertices of the probability simplex.
\end{proof}

\begin{theorem}
\label{fptas_theorem}
Let $C \in \mathbb{\hat{C}}$ be such that it has been obtained by an affine transformation on a antisymmetric matrix. Then starting at the uniform strategy, the average of iterated Hedge converges to an $\epsilon$-approximate symmetric Nash equilibrium strategy in at most
\begin{align*}
\floor*{\frac{\ln(n)}{\frac{\epsilon}{2}\ln\left(1+ \frac{\epsilon}{2} \right)}}
\end{align*} 
iterations using a fixed learning rate equal to $\ln(1+\epsilon/2)$.
\end{theorem}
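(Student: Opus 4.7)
The plan is to invoke Lemma~\ref{approximation_lemma_corollary} with the specific choice $\alpha = \ln(1+\epsilon/2)$ and $\theta = \epsilon/2$. These two choices are calibrated so that, on the one hand, $(\exp\{\alpha\}-1) + \theta = \epsilon/2 + \epsilon/2 = \epsilon$ (this will be the equilibrium approximation error), and on the other hand $\alpha\theta = \tfrac{\epsilon}{2}\ln(1+\tfrac{\epsilon}{2})$, so that the iteration threshold $\floor*{\ln(n)/(\alpha\theta)}$ coming from the ``uniform-start'' half of Lemma~\ref{approximation_lemma_corollary} matches the claimed bound on the nose. Since $X^0$ is uniform, the lemma then yields, for every $Y \in \mathbb{X}(C)$,
\[
\frac{1}{K+1}\sum_{k=0}^{K} (Y - X^k) \cdot CX^k \;\leq\; \epsilon.
\]

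Next I would cash in the affine--antisymmetric hypothesis to convert this ``regret-style'' inequality into an equilibrium-approximation inequality. Writing $C = aA + bJ$ with $A$ antisymmetric, $J$ the all-ones matrix, and $a,b$ scalars chosen so that $C \in \mathbb{\hat{C}}$, any probability vector $X$ satisfies $X \cdot AX = 0$ (by antisymmetry) and $X \cdot JX = 1$, so $X \cdot CX = b$ is the \emph{same constant} across the entire simplex. In particular, $X^k \cdot CX^k = b$ for every $k$, and if $\bar{X}^K \doteq \frac{1}{K+1}\sum_{k=0}^{K} X^k$ denotes the empirical average, then $\bar{X}^K \cdot C\bar{X}^K = b$ as well. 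Combining this with linearity of $C$ in its argument (which gives $\frac{1}{K+1}\sum_{k=0}^{K} Y \cdot CX^k = Y \cdot C\bar{X}^K$) rewrites the preceding display as
\[
Y \cdot C\bar{X}^K - \bar{X}^K \cdot C\bar{X}^K \;\leq\; \epsilon \qquad \text{for every } Y \in \mathbb{X}(C).
\]

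Maximizing the left-hand side over $Y$, which is linear in $Y$ and hence attains its maximum at a pure strategy $E_{i^*}$ with $i^*$ maximizing $(C\bar{X}^K)_i$, delivers $(C\bar{X}^K)_{\max} - \bar{X}^K \cdot C\bar{X}^K \leq \epsilon$, which is exactly the definition of $\bar{X}^K$ being an $\epsilon$-approximate symmetric Nash equilibrium strategy. The only remaining bookkeeping is to note that, for uniform $X^0$, the divergence $RE(Y, X^0) = \ln n - H(Y)$ is convex in $Y$ and therefore maximized at the vertices, where it equals $\ln n$, so the ``generic'' threshold $\floor*{RE(Y,X^0)/(\alpha\theta)}$ is uniformly dominated by $\floor*{\ln(n)/(\alpha\theta)}$ as required. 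The main --- and arguably only --- conceptual step is spotting that the affine-antisymmetric assumption is exactly what is needed to identify $\tfrac{1}{K+1}\sum_k X^k \cdot CX^k$ with $\bar{X}^K \cdot C\bar{X}^K$, so that no slack is lost when passing from the averaged regret inequality to the equilibrium condition; without constancy of $X \cdot CX$ on the simplex, one would be forced to absorb a Jensen-type gap into the final bound.
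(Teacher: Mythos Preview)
Your proof is correct and follows essentially the same route as the paper's: invoke Lemma~\ref{approximation_lemma_corollary} with $\alpha=\ln(1+\epsilon/2)$ and $\theta=\epsilon/2$, then use the affine--antisymmetric hypothesis to replace $\tfrac{1}{K+1}\sum_k X^k\cdot CX^k$ by $\bar X^K\cdot C\bar X^K$ and maximize over $Y$. If anything, you are more explicit than the paper in spelling out why $X\cdot CX$ is constant on the simplex (via $C=aA+bJ$) and why the uniform start yields the $\ln(n)$ bound on $RE(Y,X^0)$; the paper simply asserts both of these steps.
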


\begin{proof}
This theorem is a simple implication of Lemma \ref{approximation_lemma_corollary}. Note that since $Y$ is arbitrary in \eqref{limeisgood}, we may write it as
\begin{align*}
\max_{i=1}^n \left\{ C \left( \frac{1}{K+1} \sum_{k=0}^K X^k \right) \right\} - \frac{1}{K+1} \sum_{k=0}^K X^k \cdot CX^k \leq (\exp\{\alpha\} - 1) + \theta.
\end{align*}
Using the notation
\begin{align*}
\bar{X}^K \equiv \frac{1}{K+1} \sum_{k=0}^K X^k
\end{align*}
and using also the assumption that $C$ has been obtained by an affine transformation on a antisymmetric matrix, we obtain that
\begin{align*}
(C\bar{X}^K)_{\max} - \bar{X}^K \cdot C\bar{X}^K \leq (\exp\{\alpha\} - 1) + \theta.
\end{align*}
Letting $\theta = \epsilon/2$ and $\alpha = \ln(1+\epsilon/2)$ and applying Lemma \ref{approximation_lemma_corollary}, we obtain the theorem.
\end{proof}

We note that the previous analysis nearly exactly matches the bound in \citep[Section 6.1]{FreundSchapire2} although these bounds have been obtained using different analytical routes.

\section{A variational perspective on algorithmic boosting}
\label{ReplicatorBoosting}

Our definition of algorithmic boosting of discrete maps extends in a natural manner to continuous flows. In this section, we consider algorithmically boosting the replicator dynamic, which is given by the following differential equation:
\begin{align*}
\dot{X}(i) = X(i) \left( (CX)_i - X \cdot CX \right), \quad i = 1, \ldots, n.
\end{align*} 
In fact, from the perspective of computing Nash equilibria in symmetric bimatrix games (and, more generally, solving {\em variational inequalities} over the standard simplex), Hedge can be meaningfully understood as a discretization of the replicator dynamic. Our first task in this section is to prove this duality between Hedge and the replicator dynamic. Then, as our main result in this section, we prove that the $\omega$-limit set of the long-time average of the replicator dynamic in a symmetric zero-sum game consists entirely of Nash equilibria in this game. In this result, we observe a phenomenon that is analogous to that of the previous section, namely, that although an orbit may not in itself converge to the desired fixed point, by algorithmically boosting the orbit we obtain convergence.

\subsection{Hedge is a discretization of the replicator dynamic}

In this part of this section, we assume that $C$ is an, in general, nonlinear operator. Let us first note that {\em consistency} and {\em convergence} are standard properties numerical integrators satisfy (for example, see \citep{Burden}). That Hedge is a consistent numerical integrator for the replicator dynamic rests on the observation that
\begin{align*}
\frac{d}{d \alpha} \left. \left( X(i) \frac{\exp \left\{ \alpha (CX)_i \right\}}{\sum_{j=1}^n X(j) \exp\{\alpha (CX)_j \}} \right) \right|_{\alpha = 0} = X(i) \left( (CX)_i - X \cdot CX \right), i = 1, \ldots, n.
\end{align*}
We note that under a stochastic model of evolution, a similar observation has been leveraged for the study of dynamics in {\em congestion games} by \cite{Piliouras}. Using the previous observation, we also prove convergence by comparing the error Hedge generates relative to the Euler method, which approximates the replicator dynamic using iterates generated by the difference equation
\begin{align*}
W^{k+1}(i) = W^k(i) + \alpha W(i) \left( (CW)_i - W \cdot CW \right), k=0, \ldots, K.
\end{align*}
Starting from the interior of the simplex, for any finite $K$, there exists $\alpha$ such that Euler's method remains in the interior. Therefore, for small enough time step, Euler's method remains well-defined given any number of finite iterations. Euler's method is convergent under the assumption that the replicator equation is Lipschitz and under the assumption that the second derivative of the solution trajectory with respect to time is bounded. We have the following theorem:

\begin{theorem}
Under the aforementioned assumptions that ensure that the Euler method is a convergent numerical integrator for the replicator dynamic and under the further assumption that $\{\max\{ \| CX \| | X \in \mathbb{X}(C)\} < \infty$, Hedge is a convergent numerical integrator for the replicator dynamic.
\end{theorem}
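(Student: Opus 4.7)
The plan is to prove convergence by comparing Hedge to the Euler method step by step and then invoking the assumed convergence of the Euler scheme. Convergence of a one-step integrator means that if we fix an integration horizon $t > 0$, take step size $\alpha > 0$, and set $K = \lfloor t/\alpha \rfloor$, then $\|T^K(X^0) - X(t)\| \to 0$ as $\alpha \to 0$, where $X(t)$ is the solution of the replicator ODE starting at $X^0$. Since the Euler iterates $W^K$ are assumed to satisfy $\|W^K - X(t)\| \to 0$, by the triangle inequality it suffices to show that $\|T^K(X^0) - W^K\| \to 0$ as well.

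First I would establish a one-step (local) comparison between Hedge and Euler. Writing out $T(X)$ and expanding $\exp\{\alpha(CX)_i\}$ as $1 + \alpha (CX)_i + \tfrac{1}{2}\alpha^2(CX)_i^2 + \cdots$, and expanding the denominator $\sum_j X(j)\exp\{\alpha(CX)_j\}$ likewise, a direct computation shows
\begin{align*}
T_i(X) = X(i) + \alpha X(i)\bigl((CX)_i - X\cdot CX\bigr) + R_i(X,\alpha),
\end{align*}
where the remainder satisfies $|R_i(X,\alpha)| \le M\alpha^2$ for some constant $M$ depending only on $\sup\{\|CX\| : X \in \mathbb{X}(C)\}$, which is finite by hypothesis. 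Consequently, the one-step discrepancy between Hedge and Euler, started from the same point $X$, is bounded componentwise by $M\alpha^2$, and hence in norm by $M'\alpha^2$ for a constant $M'$ depending only on $n$ and the uniform bound on $\|CX\|$.

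Next I would propagate this local bound to a global one. Let $e_k = \|T^k(X^0) - W^k\|$. Starting from the identity $e_0 = 0$, compare one step of Hedge applied to $T^k(X^0)$ with one step of Euler applied to $W^k$. Using the local bound above, adding and subtracting one step of Hedge applied to $W^k$, I get
\begin{align*}
e_{k+1} \le \|T(T^k(X^0)) - T(W^k)\| + \|T(W^k) - W^{k+1}\| \le (1 + L\alpha)\, e_k + M'\alpha^2,
\end{align*}
where $L$ is a Lipschitz constant for the Hedge map on the simplex (available because $X \mapsto X(i)\exp\{\alpha(CX)_i\}/\sum_j X(j)\exp\{\alpha(CX)_j\}$ is smooth on the compact simplex and $\|CX\|$ is uniformly bounded). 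A discrete Grönwall iteration then yields $e_K \le M'\alpha^2 \cdot \frac{(1+L\alpha)^K - 1}{L\alpha} \le \frac{M'}{L}(e^{Lt} - 1)\,\alpha$ when $K\alpha \le t$, which tends to $0$ as $\alpha \to 0$.

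Combining the two convergence statements via the triangle inequality gives $\|T^K(X^0) - X(t)\| \le e_K + \|W^K - X(t)\| \to 0$, which is the definition of convergence of Hedge as a numerical integrator. The main obstacle is the bookkeeping in the first step: producing a clean, uniform $O(\alpha^2)$ bound on the local discrepancy from the Taylor expansions of both the numerator and denominator of Hedge while keeping track of how the boundedness of $\|CX\|$ controls every remainder. Once that is in hand, the Lipschitz-and-Grönwall portion of the argument is routine.
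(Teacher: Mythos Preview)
Your proposal is correct and follows essentially the same route as the paper: a local $O(\alpha^2)$ discrepancy between Hedge and Euler, a Lipschitz-based recursion, a discrete Gr\"onwall bound, and the triangle inequality against the assumed Euler convergence. The only cosmetic difference is that the paper expands $T_i(X) = X(i) + \alpha F_i(X) + \tfrac{\alpha^2}{2}\,\partial_\alpha^2 T_i(X)\big|_{\alpha=\xi_i}$ and uses the Lipschitz constant of the vector field $F$ in the recursion, whereas you insert $T(W^k)$ and use a Lipschitz bound $1+L\alpha$ for the Hedge map itself; note that your phrasing ``$L$ is a Lipschitz constant for the Hedge map'' is slightly off, since it is $1+L\alpha$ (not $L$) that bounds $\|T(X)-T(Y)\|/\|X-Y\|$.
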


\begin{proof}
The Taylor expansion of $T$ at $\alpha = 0$ gives
\begin{align*}
T_i(X) = X(i) + \alpha X(i) ( E_i \cdot CX - X \cdot CX) + \frac{\alpha^2}{2} \left. \frac{d^2 T_i(X)}{d \alpha^2} \right|_{\alpha = \xi_i}.
\end{align*}
Using the notation
\begin{align*}
F_i(X) \equiv X(i) ( E_i \cdot CX - X \cdot CX),
\end{align*}
we obtain
\begin{align*}
T_i(X) = X(i) + \alpha F_i(X) + \frac{\alpha^2}{2} \left. \frac{d^2 T_i(X)}{d \alpha^2} \right|_{\alpha = \xi_i}.
\end{align*}
Let us assume $F$ is Lipschitz with constant $L$. Furthermore, under the assumption that $\{\max\{ \| CX \| | X \in \mathbb{X}(C)\} < \infty$, there exists a positive constant $M$ such that 
\begin{align*}
\forall i = 1, \ldots n \mbox{ }\forall X \in \mathbb{X}(C) : \left| \frac{d^2 T_i(X)}{d \alpha^2} \right| \leq M.
\end{align*}
To show convergence, note that Hedge gives
\begin{align*}
X^{k+1}(i) = X^k(i) + \alpha F_i(X^k) + \frac{\alpha^2}{2} \left. \frac{d^2 T_i(X)}{d \alpha^2} \right|_{\alpha = \xi^k_i}
\end{align*}
whereas Euler's method gives
\begin{align*}
W^{k+1}(i) = W^k(i) + \alpha F_i(W^k).
\end{align*}
Subtracting these equations, we obtain
\begin{align*}
X^{k+1}(i) - W^{k+1}(i) = X^k(i) - W^k(i) + \alpha (F_i(X^k) - F_i(W^k)) + \frac{\alpha^2}{2} \left. \frac{d^2 T_i(X)}{d \alpha^2} \right|_{\alpha = \xi_i^k}
\end{align*}
and hence
\begin{align*}
|X^{k+1}(i) - W^{k+1}(i)| \leq |X^k(i) - W^k(i)| + \alpha |F_i(X^k) - F_i(W^k)| + \frac{\alpha^2}{2} \left| \left. \frac{d^2 T_i(X)}{d \alpha^2} \right|_{\alpha = \xi_i^k} \right|
\end{align*}
Since, as noted above, $F$ is Lipschitz with parameter $L$ and 
\begin{align*}
\left| \left. \frac{d^2 T_i(X)}{d \alpha^2} \right|_{\alpha = \xi_k} \right| \leq M
\end{align*}
we obtain
\begin{align*}
|X^{k+1}(i) - W^{k+1}(i)| \leq (1 + \alpha L) |X^k(i) - W^k(i)| + \frac{\alpha^2 M}{2}.
\end{align*}
To proceed further, we need the following lemma:

\begin{lemma}
\label{bound_lemma}
If $s$ and $t$ are positive real numbers, $\{a_i\}_{i = 0}^k$ is a sequence satisfying $a_0 \geq -t/s$ and
\begin{align*}
a_{i+1} \leq (1+s)a_i + t, \forall i = 0, \ldots, k-1
\end{align*}
then
\begin{align*}
a_{i+1} \leq \exp\{(i+1)s\} \left( a_0 + \frac{t}{s} \right) - \frac{t}{s}.
\end{align*}
\end{lemma}

\begin{proof}
See \citep{Burden}.
\end{proof}

Applying Lemma \ref{bound_lemma}, we further obtain
\begin{align*}
|X^{k+1}(i) - W^{k+1}(i)| \leq \exp\{ (k+1) \alpha L \} \left( |X^0(i) - W^0(i)| + \frac{\alpha M}{2 L} \right) - \frac{\alpha M}{2L}
\end{align*}
which implies
\begin{align*}
|X^{k+1}(i) - W^{k+1}(i)| \leq (\exp\{ (k+1) \alpha L \} -1) \frac{\alpha M}{2 L}.
\end{align*}
Therefore, as $\alpha \rightarrow 0$, we have that $|X^{k+1}(i) - W^{k+1}(i)| \rightarrow 0$, and, given that Euler's method is convergent, Hedge is similarly convergent.
\end{proof}

\subsection{Convergence of the long-run average of the replicator dynamic}

\begin{theorem}
\label{variationalboostingtheorem}
Let $C$ be the payoff matrix of a symmetric zero-sum game $(C, C^T)$. Furthermore, let $X_\alpha, \alpha \in [0, t]$ be an orbit of the replicator dynamic 
\begin{align*}
\dot{X} = X(i) ( (CX)_i - X \cdot CX), i = 1, \ldots, n.
\end{align*}
Then the $\omega$-limit set of the long-run average
\begin{align*}
 \frac{1}{t} \int_{0}^{t} X_{\omega} d \omega
\end{align*}
consists entirely of symmetric Nash equilibrium strategies of $(C, C^T)$.
\end{theorem}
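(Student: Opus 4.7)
The strategy is to use $RE(Y, X_\omega)$, for a fixed reference strategy $Y$, as a Lyapunov-style quantity along the replicator flow, and to exploit the antisymmetry of $C$ to obtain a clean identity. Since $C^T = -C$, we have $X \cdot CX = 0$ for every $X \in \mathbb{X}(C)$, and any orbit that begins in $\mathbb{\mathring{X}}(C)$ stays in $\mathbb{\mathring{X}}(C)$ for all finite times because $\frac{d}{d\omega} \ln X_\omega(i) = (CX_\omega)_i$ is uniformly bounded on the simplex. Consequently $RE(Y, X_\omega)$ is finite for every $Y \in \mathbb{X}(C)$, and in particular for every pure strategy $E_j$.

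Next, I would differentiate $RE(Y, X_\omega)$ along the dynamic:
\begin{align*}
\frac{d}{d \omega} RE(Y, X_\omega) = -\sum_{i=1}^n Y(i) \frac{\dot{X}_\omega(i)}{X_\omega(i)} = -Y \cdot CX_\omega + X_\omega \cdot CX_\omega = -Y \cdot CX_\omega,
\end{align*}
where the last equality uses antisymmetry. Integrating from $0$ to $t$ and exchanging the integral with the linear operator $Y \cdot C(\cdot)$ yields
\begin{align*}
t \, Y \cdot C \bar{X}_t = RE(Y, X_0) - RE(Y, X_t) \leq RE(Y, X_0),
\end{align*}
where $\bar{X}_t \equiv \frac{1}{t} \int_0^t X_\omega \, d\omega$ and the inequality follows from the nonnegativity of relative entropy. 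Hence $Y \cdot C \bar{X}_t \leq RE(Y, X_0)/t$ for every $Y$ with $RE(Y, X_0)$ finite.

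Finally, I would pass to the limit. Let $X^*$ be any accumulation point of $\{\bar{X}_t\}$, achieved along a subsequence $t_k \to \infty$ (which exists by compactness of $\mathbb{X}(C)$). Taking $Y = E_j$, whose cost $RE(E_j, X_0) = -\ln X_0(j)$ is finite since $X_0$ is interior, and letting $k \to \infty$ gives $(CX^*)_j \leq 0$ for every $j$, so $(CX^*)_{\max} \leq 0$. Since antisymmetry also yields $X^* \cdot CX^* = 0$, and since $(CX^*)_{\max} \geq X^* \cdot CX^*$ always holds (the latter is a convex combination of the entries of $CX^*$), the two inequalities force $(CX^*)_{\max} - X^* \cdot CX^* = 0$, which is precisely the characterization of a symmetric Nash equilibrium strategy given in the preliminaries.

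The main obstacle is really just bookkeeping: one must verify interior-invariance of the orbit carefully and justify the interchange of time integration with the linear functional $Y \cdot C(\cdot)$. The genuinely substantive step is the cancellation $X_\omega \cdot CX_\omega = 0$, which turns a quadratic term into a linear identity in the time average and is what makes the whole argument go through.
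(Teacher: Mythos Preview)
Your argument is correct and in fact cleaner than the paper's. Both proofs hinge on the identity $\frac{d}{d\omega} RE(Y, X_\omega) = X_\omega \cdot CX_\omega - Y \cdot CX_\omega$ together with nonnegativity of relative entropy, but you invoke antisymmetry at the very first step to kill the quadratic term $X_\omega \cdot CX_\omega$, leaving a linear expression whose time average is immediately $-Y \cdot C\bar X_t$. The paper instead keeps the quadratic term around, extracts a further convergent subsequence of $\{X_{t_k}\}$, and runs a carrier argument (distinguishing pure strategies inside and outside the support of the limit) to show that $(C\bar X)_{\max}$ equals $\lim \frac{1}{t_m}\int_0^{t_m} X_\omega \cdot CX_\omega\, d\omega$ \emph{before} finally using $C = -C^T$ to set that limit to zero. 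What the paper's route buys is an intermediate identity valid for general $C$; what your route buys is a shorter, more transparent proof of the stated theorem, with no need for the carrier analysis or the second subsequence. Both require the orbit to start in $\mathbb{\mathring{X}}(C)$ so that $RE(E_j, X_0)$ is finite, an assumption you make explicit and the paper leaves implicit.
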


\begin{proof}
Let
\begin{align*}
\bar{X}_t = \frac{1}{t} \int_0^t X_\omega d\omega
\end{align*}
where $X_\omega, \omega \in [0, t]$ is a trajectory of the replicator dynamic. Furthermore, let $\{ \bar{X}_{t_n} \}_0^{\infty}$ be a convergent subsequence such that $\{ \bar{X}_{t_n} \} \rightarrow \bar{X}$. That is, $\bar{X}$ is in the $\omega$-limit set of the long-time average $\bar{X}_t$. We will show that $\bar{X}$ is a Nash equilibrium strategy. To that end, let
\begin{align*}
\left\{ \frac{1}{t_m} \int_0^{t_m} (X_\omega \cdot CX_\omega) d\omega \right\}_{m=0}^{\infty}
\end{align*}
be a convergent subsequence of the sequence of long-time averages. Note now that letting $Y \in \mathbb{X}(C)$ be arbitrary, since
\begin{align*}
\int_0^t \frac{d}{d \omega} RE(Y, X_\omega) d\omega = RE(Y, X_t) - RE(Y, X_0),
\end{align*}
we obtain that
\begin{align*}
\liminf\limits_{m \rightarrow \infty} \left\{ \frac{1}{t_m} \int_0^{t_m} \frac{d}{d \omega} RE(Y, X_\omega) d\omega \right\} \geq 0
\end{align*}
where we have used that the relative entropy function is nonnegative. Straight calculus (for example, see \citep[p. 98]{Weibull}) gives that
\begin{align*}
\frac{d}{d\omega} RE(Y, X_\omega) = X_\omega \cdot CX_\omega - Y \cdot CX_\omega.
\end{align*}
Combining the previous relations, we obtain
\begin{align}
\forall Y \in \mathbb{X}(C) : Y \cdot C\bar{X} \leq \lim_{m \rightarrow \infty} \left\{ \frac{1}{t_m} \int_0^{t_m} (X_\omega \cdot CX_\omega) d\omega \right\}.\label{one}
\end{align}
Now let $\{ X_{t_k} \}_{0}^{\infty}$ be a convergent subsequence of the sequence
\begin{align*}
\left\{ \frac{1}{t_m} \int_0^{t_m} (X_\omega \cdot CX_\omega) d\omega \right\}_{m=0}^{\infty}
\end{align*}
that converges to $X$. Furthermore, let $E_i$ be a pure strategy in the carrier of $X$. Then, since the relative entropy is bounded, we obtain that
\begin{align*}
\lim_{k \rightarrow \infty} \left\{ \frac{1}{t_k} \int_0^{t_k} \frac{d}{d \omega} RE(E_i, X_\omega) d\omega \right\} = 0,
\end{align*}
which implies that
\begin{align*}
(C\bar{X})_i = \lim_{m \rightarrow \infty} \left\{ \frac{1}{t_m} \int_0^{t_m} (X_\omega \cdot CX_\omega) d\omega \right\}.
\end{align*}
Therefore, for every pure strategy $E_i$ in the carrier of $X$, we have that
\begin{align*}
(C\bar{X})_i = \lim_{m \rightarrow \infty} \left\{ \frac{1}{t_m} \int_0^{t_m} (X_\omega \cdot CX_\omega) d\omega \right\}
\end{align*}
and, for every pure strategy $E_j$ outside the carrier of $X$, we have that
\begin{align*}
(C\bar{X})_j \leq \lim_{m \rightarrow \infty} \left\{ \frac{1}{t_m} \int_0^{t_m} (X_\omega \cdot CX_\omega) d\omega \right\}
\end{align*}
which implies that
\begin{align*}
(C\bar{X})_{\max} = \lim_{m \rightarrow \infty} \left\{ \frac{1}{t_m} \int_0^{t_m} (X_\omega \cdot CX_\omega) d\omega \right\}.
\end{align*}
Assuming now the game is zero-sum so that $C$ is an antisymmetric matrix (which implies that, for all $x$, $x \cdot Cx =0$, we obtain from the previous inequality that
\begin{align*}
(C\bar{X})_{\max} = 0 = \bar{X} \cdot C \bar{X},
\end{align*}
and, therefore, $\bar{X}$ is a Nash equilibrium as claimed.
\end{proof}

\section{Nash equilibrium computation using clairvoyant averaging}
\label{constructive}

Given the payoff matrix $C$ of a symmetric bimatrix game $(C, C^T)$, recall that Hedge is given by map $\mathsf{T} : \mathbb{X}(C) \rightarrow \mathbb{X}(C)$ where
\begin{align*}
\mathsf{T}_i(X) = X(i) \frac{\exp\{\alpha (CX)_i \}}{\sum_{j=1}^n X(j) \exp\{\alpha (CX)_j\}}, i =1, \ldots, n,
\end{align*}
and $\alpha$ is a parameter called the learning rate. In this paper, we consider the map $\mathsf{T} : \mathbb{X}(C) \times \mathbb{X}(C) \rightarrow \mathbb{X}(C)$ where
\begin{align*}
\mathsf{T}_i(X|Z) = X(i) \frac{\exp\{\alpha (CZ)_i \}}{\sum_{j=1}^n X(j) \exp\{\alpha (CZ)_j\}}, i =1, \ldots, n,
\end{align*}
and $Z \in \mathbb{X}(C)$. If $\alpha$ is small, the previous map can be approximated by the map $T : \mathbb{X}(C) \times \mathbb{X}(C) \rightarrow \mathbb{X}(C)$, where
\begin{align*}
T_i(X|Z) = X(i) + \alpha X(i) ((CZ)_i - X \cdot CZ), i =1, \ldots, n.
\end{align*}
This is the Euler approximation. In this section, we will work exclusively with this approximation.

\subsection{A formula for the equilibrium approximation error}

Let us assume $0 \leq C \leq 1$. Furthermore, let $T(X|Z) \equiv \hat{X}$. Straight algebra gives
\begin{align*}
\frac{\hat{X}(i)}{\hat{X}(j)} = \frac{X(i)}{X(j)} \frac{1 + \alpha ((CZ)_i - X \cdot CZ)}{1 + \alpha ((CZ)_j - X \cdot CZ)}
\end{align*}
and taking logarithms on both sides we obtain
\begin{align*}
\ln \left( \frac{\hat{X}(i)}{\hat{X}(j)} \right) = \ln \left( \frac{X(i)}{X(j)} \right) + \ln \left( \frac{1 + \alpha ((CZ)_i - X \cdot CZ)}{1 + \alpha ((CZ)_j - X \cdot CZ)} \right)
\end{align*}
which implies that
\begin{align*}
\ln \left( \frac{\hat{X}(i)}{\hat{X}(j)} \right) = \ln \left( \frac{X(i)}{X(j)} \right) + \ln \left( 1 + \alpha ((CZ)_i - X \cdot CZ) \right) - \ln \left( 1 + \alpha ((CZ)_j - X \cdot CZ) \right)
\end{align*}
which further implies using standard inequalities involving the logarithm
\begin{align*}
\ln \left( \frac{\hat{X}(i)}{\hat{X}(j)} \right) \geq \ln \left( \frac{X(i)}{X(j)} \right) + \frac{\alpha ((CZ)_i - X \cdot CZ)}{1 + \alpha ((CZ)_i - X \cdot CZ)} - \alpha ((CZ)_j - X \cdot CZ)
\end{align*}
which even further implies using the Taylor expansion of the reciprocal and using the assumption $0 \leq C \leq 1$ that
\begin{align*}
O(\alpha^3) + \alpha^2 + \ln \left( \frac{\hat{X}(i)}{\hat{X}(j)} \right) \geq \ln \left( \frac{X(i)}{X(j)} \right) + \alpha ((CZ)_i - X \cdot CZ) - \alpha ((CZ)_j - X \cdot CZ)
\end{align*}
which even further implies using straight algebra that
\begin{align*}
O(\alpha^3) + \alpha^2 + \ln \left( \frac{\hat{X}(i)}{\hat{X}(j)} \right) \geq \ln \left( \frac{X(i)}{X(j)} \right) + \alpha ((CZ)_i - (CZ)_j).
\end{align*}
We may write the previous equation as
\begin{align*}
O(\alpha_k^3) + \alpha_k^2 + \ln \left( \frac{X^{k+1}(i)}{X^{k+1}(j)} \right) \geq \ln \left( \frac{X^k(i)}{X^k(j)} \right) + \alpha_k ((CZ^k)_i - (CZ^k)_j).
\end{align*}
Denoting $\bar{\alpha}$ an upper bound on the sequence of learning rates used by the algorithm, summing from $k=0$ to $k=K$ and dividing by $A_K$ and rearranging, we obtain
\begin{align*}
O(\bar{\alpha}^2) + \bar{\alpha} + \frac{1}{A_K} \ln \left( \frac{X^{K+1}(i)}{X^{K+1}(j)} \right) \geq \frac{1}{A_K} \ln \left( \frac{X^0(i)}{X^0(j)} \right) + ((C\bar{Z}^K)_i - (C\bar{Z}^K)_j)
\end{align*}
which implies
\begin{align*}
O(\bar{\alpha}^2) + \bar{\alpha} + \frac{1}{A_K} \ln \left( \frac{X^{K+1}(i)}{X^{K+1}(j)} \right) \geq \frac{1}{A_K} \ln \left( \frac{\min_iX^0(i)}{ \max_j X^0(j)} \right) + ((C\bar{Z}^K)_i - (C\bar{Z}^K)_j)
\end{align*}
which further implies
\begin{align*}
O(\bar{\alpha}^2) + \bar{\alpha} + \frac{1}{A_K} &\sum_{j=1}^n \bar{Z}^K(j) \ln \left( \frac{X^{K+1}(i_{\max})}{X^{K+1}(j)} \right) \geq\\
& \geq \frac{1}{A_K} \ln \left( \frac{\min_iX^0(i)}{ \max_j X^0(j)} \right) + ((C\bar{Z}^K)_{\max} - \bar{Z}^K \cdot C\bar{Z}^K)
\end{align*}
which even further implies
\begin{align}
O(\bar{\alpha}^2) + \bar{\alpha} + \frac{1}{A_K} &\sum_{j=1}^n \left( \frac{1}{A_K} \sum_{k=0}^K \alpha_k Z^k(j) \right) \ln \left( \frac{X^{K+1}(i_{\max})}{X^{K+1}(j)} \right) \geq\notag\\
& \geq \frac{1}{A_K} \ln \left( \frac{\min_iX^0(i)}{ \max_j X^0(j)} \right)+ ((C\bar{Z}^K)_{\max} - \bar{Z}^K \cdot C\bar{Z}^K)\notag
\end{align}
which even further implies
\begin{align}
O(\bar{\alpha}^2) + \bar{\alpha} - \frac{1}{A_K} &\sum_{j=1}^n \left( \frac{1}{A_K} \sum_{k=0}^K \alpha_k Z^k(j) \right) \ln \left( X^{K+1}(j) \right) \geq\notag\\
& \geq \frac{1}{A_K} \ln \left( \frac{\min_iX^0(i)}{ \max_j X^0(j)} \right) + ((C\bar{Z}^K)_{\max} - \bar{Z}^K \cdot C\bar{Z}^K).\label{super}
\end{align}

\subsection{Our algorithm and one of its fundamental properties}

Our algorithm starts at an interior strategy $X^0 \in \mathbb{\mathring{X}}(C)$ and at every iteration $K = 0, 1, \ldots$ it computes $\bar{\alpha} \geq \alpha_K > 0$ and $Z^K \in \mathbb{\mathring{X}}(C)$ as a solution of the convex optimization problem
\begin{align*}
\mbox{minimize} \quad &RE(X^K, Z)\\
\mbox{subject to} \quad &Z(i) = X^K(i) + A_K X^K(i) ((CZ)_i - X^K \cdot CZ), i =1, \ldots, n\\
&Z \in \mathbb{X}(C)
\end{align*}
where $RE(\cdot, \cdot)$ is the relative entropy function,
\begin{align*}
A_K = \sum_{k=0}^K \alpha_k,
\end{align*}
and $\alpha_K$ is chosen small enough such that
\begin{align*}
\min\left\{ X^{K+1}(i) + A_K X^{K+1}(i) ((C\mathcal{Y})_i - X^{K+1} \cdot C\mathcal{Y}) \bigg| \mathcal{Y} \in \mathbb{X}(C) \right\} > 0, i =1,\ldots,n,
\end{align*}
where
\begin{align*}
X^{K+1} = \frac{1}{A_K} \sum_{k=0}^K \alpha_k Z^k.
\end{align*}
An $\bar{\alpha} \geq \alpha_K > 0$ satisfying these constraints is shown to always exist. A small enough $\alpha_K$ can be obtained by a halving schema. We first need two elementary lemmas:

\begin{lemma}
\label{fundamental}
$\forall X \in \mathbb{\mathring{X}}(C)$ and $\forall Z \in \mathbb{X}(C)$, there exists $\hat{\alpha} > 0$ (which may depend on $Z$) such that $\forall 0\leq \alpha < \hat{\alpha}$ we have that 
\begin{align*}
X(i) + \alpha X(i) ((CZ)_i - X \cdot CZ) \in \mathbb{\mathring{X}}(C), i =1,\ldots,n. 
\end{align*}
Since $\mathbb{X}(C)$ is compact, this implies that there exists $\hat{\alpha} > 0$ such that $\forall 0\leq \alpha < \hat{\alpha}$ and $\forall Z \in \mathbb{X}(C)$ we have that 
\begin{align*}
X(i) + \alpha X(i) ((CZ)_i - X \cdot CZ) \in \mathbb{\mathring{X}}(C), i =1,\ldots,n.
\end{align*}
\end{lemma}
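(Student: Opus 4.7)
The plan is to verify the two defining conditions of membership in $\mathbb{\mathring{X}}(C)$ (the probability vector sums to one, and every coordinate is strictly positive) for the candidate vector $Y(i) := X(i) + \alpha X(i)((CZ)_i - X \cdot CZ) = X(i)\bigl(1 + \alpha((CZ)_i - X \cdot CZ)\bigr)$. The sum condition is automatic and follows from a one-line algebraic identity: distributing the sum gives $\sum_i Y(i) = \sum_i X(i) + \alpha \sum_i X(i)(CZ)_i - \alpha (X \cdot CZ)\sum_i X(i) = 1 + \alpha (X \cdot CZ) - \alpha (X \cdot CZ) = 1$, regardless of $\alpha$. So the entire burden of the proof is ensuring strict positivity.

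For coordinate-wise positivity, since $X \in \mathbb{\mathring{X}}(C)$ gives $X(i) > 0$, the sign of $Y(i)$ coincides with the sign of the scalar $1 + \alpha((CZ)_i - X \cdot CZ)$. If $(CZ)_i - X \cdot CZ \geq 0$, this scalar is strictly positive for every $\alpha \geq 0$; if $(CZ)_i - X \cdot CZ < 0$, it is strictly positive precisely when $\alpha < 1 / (X \cdot CZ - (CZ)_i)$. Define
\begin{align*}
M(Z) := \max_{i=1,\ldots,n} \bigl( X \cdot CZ - (CZ)_i \bigr).
\end{align*}
If $M(Z) \leq 0$ then any $\hat{\alpha} > 0$ works; otherwise take $\hat{\alpha} := 1/M(Z)$, and for every $0 \leq \alpha < \hat{\alpha}$ all coordinates $Y(i)$ are strictly positive. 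This establishes the first (pointwise in $Z$) part of the lemma.

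For the uniform statement, the key observation is that $M(\cdot)$ is a continuous function of $Z$ (the entries of $CZ$ depend linearly on $Z$, and $\max$ of continuous functions is continuous). Since $\mathbb{X}(C)$ is compact, $M$ attains its supremum at some $Z^\star \in \mathbb{X}(C)$; call this value $M^\star$. If $M^\star \leq 0$ any positive $\hat{\alpha}$ suffices, and otherwise the choice $\hat{\alpha} := 1/M^\star$ satisfies the claim uniformly in $Z$, because for every $Z \in \mathbb{X}(C)$ and every $i$ one has $X \cdot CZ - (CZ)_i \leq M(Z) \leq M^\star$, so $\alpha < 1/M^\star$ implies $\alpha((CZ)_i - X \cdot CZ) > -1$.

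I do not foresee a significant obstacle; the lemma is essentially a continuity/compactness argument on top of the trivial identity that the Euler step preserves affine membership in the simplex. The only mild point of care is handling the sign case $M^\star \leq 0$ separately so that $1/M^\star$ is not invoked unnecessarily.
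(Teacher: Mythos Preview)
Your proposal is correct and follows essentially the same approach as the paper: verify the affine identity $\sum_i Y(i)=1$ directly, then use interiority of $X$ plus a continuity/compactness argument to obtain a uniform $\hat\alpha$. The only difference is cosmetic---you compute the threshold explicitly via $M(Z)=\max_i(X\cdot CZ-(CZ)_i)$ and take $\hat\alpha=1/M^\star$, whereas the paper argues more abstractly that the maximal admissible $\hat\alpha_Z$ is continuous in $Z$ and hence attains a positive minimum on the compact simplex.
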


\begin{proof}
Since $X$ is an interior probability vector and since, for all $\alpha > 0$ and for all $Z \in \mathbb{X}(C)$, we have that
\begin{align*}
\sum_{i=1}^n (X(i) + \alpha X(i) ((CZ)_i - X \cdot CZ)) = 1
\end{align*}
the vector whose elements are
\begin{align*}
X(i) + \alpha X(i) ((CZ)_i - X \cdot CZ), i =1,\ldots,n 
\end{align*}
remains on the tangent space of the simplex $\mathbb{X}(C)$. This implies that for all $Z \in \mathbb{X}(C)$, there exists $\hat{\alpha}_Z > 0$ such that $\forall 0\leq \alpha < \hat{\alpha}_Z$ we have that 
\begin{align*}
X(i) + \alpha X(i) ((CZ)_i - X \cdot CZ) \in \mathbb{\mathring{X}}(C), i =1,\ldots,n
\end{align*}
as claimed. The second part of the lemma follows by the compactness of $\mathbb{X}(C)$ and the continuity of the maximum $\hat{\alpha}_Z$ as a function of $Z \in \mathbb{X}(C)$ which together imply that
\begin{align*}
\min \left\{ \hat{\alpha}_Z | Z \in \mathbb{X}(C) \right\} > 0.
\end{align*}
This completes the proof.
\end{proof}

\begin{lemma}
\label{fundamentaal}
Let $X \in \mathbb{\mathring{X}}(C)$ be such that, for all $Z \in \mathbb{X}(C)$, we have that
\begin{align*}
X(i) + A X(i) ((CZ)_i - X \cdot CZ) \in \mathbb{\mathring{X}}(C), i =1,\ldots,n
\end{align*}
for some $A > 0$. Then there exists $\hat{\alpha} > 0$ such that $\forall 0\leq \alpha < \hat{\alpha}$ and $\forall Z \in \mathbb{X}(C)$ we have that 
\begin{align*}
X(i) + (A + \alpha) X(i) ((CZ)_i - X \cdot CZ) \in \mathbb{\mathring{X}}(C), i =1,\ldots,n.
\end{align*}
\end{lemma}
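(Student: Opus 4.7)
The plan is to argue by compactness that the strict positivity assumed at parameter $A$ can be buffered by a uniform constant, so that a small additional increment $\alpha$ does not destroy positivity, while the sum-to-one condition takes care of itself.

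First I would define, for each $i$ and each $Z \in \mathbb{X}(C)$, the quantities
\begin{align*}
g_i(Z) &\doteq X(i) + A\, X(i)\bigl((CZ)_i - X \cdot CZ\bigr), \\
h_i(Z) &\doteq X(i)\bigl((CZ)_i - X \cdot CZ\bigr).
\end{align*}
Both are continuous functions of $Z$ on the compact simplex $\mathbb{X}(C)$. By the hypothesis of the lemma, $g_i(Z) > 0$ for all $i$ and all $Z \in \mathbb{X}(C)$, so the continuous function $\min_{i} g_i(Z)$ attains a strictly positive minimum $\delta > 0$ on $\mathbb{X}(C)$. Similarly, $|h_i(Z)|$ is bounded above by some constant $M \geq 0$ on $\mathbb{X}(C)$.

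Next I would set $\hat{\alpha} = \delta/M$ (or any positive value if $M = 0$). Then for any $0 \leq \alpha < \hat{\alpha}$ and any $Z \in \mathbb{X}(C)$, writing
\begin{align*}
X(i) + (A+\alpha)\, X(i)\bigl((CZ)_i - X \cdot CZ\bigr) = g_i(Z) + \alpha\, h_i(Z) \geq \delta - \alpha M > 0,
\end{align*}
which gives strict positivity of all coordinates uniformly in $Z$. The sum-to-one property is immediate: since $\sum_i X(i)\bigl((CZ)_i - X\cdot CZ\bigr) = X\cdot CZ - X\cdot CZ = 0$, we get
\begin{align*}
\sum_{i=1}^n \Bigl[ X(i) + (A+\alpha)\, X(i)\bigl((CZ)_i - X \cdot CZ\bigr)\Bigr] = 1,
\end{align*}
so the resulting vector lies in $\mathbb{\mathring{X}}(C)$ as claimed.

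There is no real obstacle here beyond being careful about the compactness argument; the substantive content of the lemma is precisely that because the positivity margin $\delta$ at parameter $A$ can be taken uniform over $Z$ (by continuity on a compact set) and because the perturbation $h_i$ is likewise uniformly bounded, the same $\hat{\alpha}$ works simultaneously for all $Z \in \mathbb{X}(C)$, mirroring the second (uniformization) part of Lemma \ref{fundamental}.
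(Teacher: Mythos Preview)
Your proof is correct and follows essentially the same idea as the paper's: both rely on the fact that the map $(\alpha,Z)\mapsto X(i)+(A+\alpha)X(i)((CZ)_i-X\cdot CZ)$ is continuous and strictly positive at $\alpha=0$, together with the sum-to-one identity. The paper phrases this tersely as ``a simple implication of the intermediate value theorem,'' whereas you make the compactness step explicit by extracting a uniform margin $\delta$ and a uniform bound $M$ to produce $\hat{\alpha}=\delta/M$; your version is more careful about the uniformity in $Z$, which the paper leaves implicit.
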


\begin{proof}
Granted that, as shown in the previous lemma, for all $\alpha > 0$ the vector whose elements are
\begin{align*}
X(i) + (A + \alpha) X(i) ((CZ)_i - X \cdot CZ), i =1,\ldots,n
\end{align*}
is on the tangent space of the simplex $\mathbb{X}(C)$, this lemma is a simple implication of the intermediate value theorem.
\end{proof}

We have the following fundamental property:

\begin{lemma}
\label{fundamental2}
$\forall K \geq 0$, there exist  $\bar{\alpha} \geq \alpha_K > 0$ and $Z^K \in \mathbb{\mathring{X}}(C)$ satisfying the algorithm's constraints.
\end{lemma}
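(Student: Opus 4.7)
I will proceed by induction on $K$, maintaining the invariant that $X^K \in \mathbb{\mathring{X}}(C)$ and that
\begin{equation*}
X^K(i) + A_{K-1}\, X^K(i)\bigl((C\mathcal{Y})_i - X^K \cdot C\mathcal{Y}\bigr) > 0 \quad \forall \mathcal{Y} \in \mathbb{X}(C),\ \forall i,
\end{equation*}
with the convention $A_{-1} = 0$, so that the base case reduces to $X^0 \in \mathbb{\mathring{X}}(C)$, which is given. This invariant is exactly what activates Lemma \ref{fundamentaal} at the next step.

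For the inductive step, Lemma \ref{fundamentaal} applied to $X^K$ with $A := A_{K-1}$ yields $\hat\alpha > 0$ such that, for every $\alpha_K \in (0, \hat\alpha)$, setting $A_K := A_{K-1} + \alpha_K$, the continuous map
\begin{equation*}
\phi_K(Z)_i := X^K(i) + A_K\, X^K(i)\bigl((CZ)_i - X^K \cdot CZ\bigr)
\end{equation*}
sends $\mathbb{X}(C)$ into $\mathbb{\mathring{X}}(C)$: the image lies in the affine hull of the simplex by the easy computation already performed in the proof of Lemma \ref{fundamental}, and strict positivity is precisely what Lemma \ref{fundamentaal} guarantees. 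Since $\mathbb{X}(C)$ is compact and convex, Brouwer's fixed-point theorem gives a fixed point $Z^K \in \mathbb{\mathring{X}}(C)$, showing that the feasible set of the defining convex program is non-empty. That set is also closed (polynomial equality constraints) and bounded, so the continuous objective $RE(X^K, \cdot)$ attains its minimum over it, producing a valid $Z^K$ after clipping $\alpha_K \le \bar\alpha$ if necessary.

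Finally, to close the induction I verify the invariant at $K+1$. We have $X^{K+1} = (A_{K-1}/A_K) X^K + (\alpha_K/A_K) Z^K \in \mathbb{\mathring{X}}(C)$ (a convex combination of interior points, with the base case $X^1 = Z^0$ handled by Lemma \ref{fundamental} directly), and $X^{K+1} \to X^K$ as $\alpha_K \to 0^+$. The inductive hypothesis supplies a uniform lower bound $\delta > 0$ on the expression $X^K(i) + A_{K-1} X^K(i)((C\mathcal{Y})_i - X^K \cdot C\mathcal{Y})$ over the compact set $\mathbb{X}(C) \times \{1, \ldots, n\}$, so continuity in $(\alpha_K, X^{K+1})$ implies the analogous expression with $(A_K, X^{K+1})$ remains strictly positive for all sufficiently small $\alpha_K$. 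Shrinking $\alpha_K$ once more if necessary simultaneously secures (i) the applicability of Lemma \ref{fundamentaal}, (ii) existence of a Brouwer fixed point, (iii) the next-step invariant, and (iv) $\alpha_K \le \bar\alpha$.

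\textbf{Main obstacle.} The chief subtlety is circular: the Brouwer fixed point $Z^K$ depends on $\alpha_K$, yet $\alpha_K$ must be chosen small enough that the resulting $X^{K+1}$ satisfies the next-step invariant, which cannot be checked until after $Z^K$ is selected. The circularity is broken by the uniform-in-$\mathcal{Y}$ margin supplied by Lemma \ref{fundamentaal}: any Brouwer fixed point arising from a sufficiently small $\alpha_K$ automatically yields a compliant $X^{K+1}$ by continuity, with no explicit knowledge of $Z^K$ required for the verification.
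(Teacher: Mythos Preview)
Your proof is correct and follows the same inductive architecture as the paper: Brouwer's fixed-point theorem for feasibility of the constraint set, plus a continuity argument in $\alpha_K$ to propagate the positivity invariant. The one genuine technical difference is in how continuity is obtained. The paper invokes Berge's maximum principle together with strict convexity of $RE(X^K,\cdot)$ to conclude that the \emph{minimizer} $Z^K$ is a continuous function of $\alpha_K$, and then tracks that specific minimizer. You instead argue that $X^{K+1}\to X^K$ as $\alpha_K\to 0$ \emph{uniformly over all feasible $Z^K$}: for $K\ge 1$ this is immediate from the convex-combination formula $X^{K+1}=(A_{K-1}/A_K)X^K+(\alpha_K/A_K)Z^K$, and for $K=0$ it follows because the feasibility equation itself forces $\|Z^0-X^0\|=O(\alpha_0)$. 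Your route is more elementary (no Berge), at the cost of a slightly terse base case: the parenthetical ``handled by Lemma~\ref{fundamental} directly'' only gives $X^1=Z^0\in\mathbb{\mathring{X}}(C)$, whereas what you actually need for the invariant is $Z^0\to X^0$, which comes from the constraint equation rather than from Lemma~\ref{fundamental}. This is easily filled in and does not affect the validity of the argument.
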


\begin{proof}
Our proof is by induction. Let us first prove the basis of the induction. We would like to show that there exists $a_0 > 0$ such that, for all $0 < \alpha_0 \leq a_0$, every solution $Z^0$ of the convex optimization problem
\begin{align*}
\mbox{minimize} \quad &RE(X^0, Z)\tag{$\mbox{OPT}_0$}\\
\mbox{subject to} \quad &Z(i) = X^0(i) + \alpha_0 X^0(i) ((CZ)_i - X^0 \cdot CZ), i =1, \ldots, n,\\
&Z \in \mathbb{X}(C)
\end{align*}
satisfies
\begin{align*}
\min\left\{ Z^0(i) + \alpha_0 Z^0(i) ((C\mathcal{Y})_i - Z^0 \cdot C\mathcal{Y}) \bigg| \mathcal{Y} \in \mathbb{X}(C) \right\} > 0, i =1,\ldots,n.\tag{$\mbox{OPT}^i_0$}
\end{align*}
Note that by Lemma \ref{fundamental} and Brouwer's fixed point theorem, there exists $\eta > 0$ such that, for all $\alpha_0 \leq \eta$, the upper optimization problem ($\mbox{OPT}_0$) is feasible. Note further that, by Berge's maximum principle, given the strict convexity of the objective function, the solutions of the upper optimization problem are a continuous function of $\alpha_0$. If $\alpha_0 = 0$, the solution of the upper optimization problem is interior by the assumption that $X^0$ is interior. Therefore, there exists $\eta' > 0$ such that, for all $0 < \alpha_0 \leq \eta'$, the solutions of the upper optimization problem are interior. Note now that, invoking again Berge's maximum principle, the objective functions of the lower optimization problems ($\mbox{OPT}^i_0$) are a continuous function of $Z^0$, which is itself a continuous function function of $\alpha_0$ and that when $\alpha_0 = 0$ the objective value is strictly positive (since $Z^0 = X^0$). Therefore, by the intermediate value theorem there exists $a_0 > 0$ such that, for all $0 < \alpha_0 \leq a_0$, the upper and lower optimization problems can be simultaneously satisfied as desired. For the induction step, we assume that
\begin{align*}
\min \left\{ X^{K}(i) + A_{K-1} X^{K}(i) ((C\mathcal{Y})_i - X^{K} \cdot C\mathcal{Y}) \bigg| \mathcal{Y} \in \mathbb{X}(C) \right\} > 0
\end{align*}
and we would like to show that there exists $\bar{\alpha} \geq \alpha_K > 0$ and $Z^K \in \mathbb{\mathring{X}}(C)$ such that
\begin{align*}
\mbox{minimize} \quad &RE(X^K, Z)\tag{$\mbox{OPT}_K$}\\
\mbox{subject to} \quad &Z(i) = X^K(i) + A_K X^K(i) ((CZ)_i - X^K \cdot CZ), i =1, \ldots, n\\
&Z \in \mathbb{X}(C)
\end{align*}
where $\alpha_K$ is chosen small enough such that
\begin{align*}
\min\left\{ X^{K+1}(i) + A_K X^{K+1}(i) ((C\mathcal{Y})_i - X^{K+1} \cdot C\mathcal{Y}) \bigg| \mathcal{Y} \in \mathbb{X}(C) \right\} > 0, i =1,\ldots,n\tag{$\mbox{OPT}^i_K$}
\end{align*}
The proof this pair of constraints admits a solution by extending $A_{K-1}$ to a larger value $A_K = A_{K-1} + \alpha_K$ follows by an argument analogous to that used in the induction basis: Observe that when $\alpha_K = 0$, the upper optimization problem ($\mbox{OPT}_K$) is feasible by the induction hypothesis and Brouwer's fixed point theorem. The induction hypothesis further guarantees that the feasible solution is interior. Lemma \ref{fundamentaal} and Brouwer's fixed point theorem further guarantee an interior solution as $\alpha_K$ increases from zero. Furthermore, Berge's maximum theorem ensures that the solutions are a continuous function of $\alpha_K$. Note now that, invoking again Berge's maximum principle, the objective functions of the lower optimization problems ($\mbox{OPT}^i_K$) are a continuous function of $Z^K$, which is itself a continuous function function of $\alpha_K$ and that when $\alpha_K = 0$ the objective values of the lower problems are strictly positive by the induction hypothesis. Therefore, by the intermediate value theorem there exists $a_K > 0$ such that, for all $0 < \alpha_K \leq a_K$, the upper and lower optimization problems can be simultaneously satisfied as desired. This completes the proof.
\end{proof}

\subsection{Our constructive proof of existence of the Nash equilibrium}

Lemma \ref{fundamental2} implies that on every iteration $K = 0, 1, \ldots$ we have that
\begin{align*}
Z^K(i) = X^K(i) + A_K X^K(i) ((CZ^K)_i - X^K \cdot CZ^K), i =1, \ldots, n,
\end{align*}
which further implies that
\begin{align*}
\frac{\alpha_K}{A_K} Z^K(i) = \frac{\alpha_K}{A_K}  X^K(i) + \alpha_K X^K(i) ((CZ^K)_i - X^K \cdot CZ^K), i =1, \ldots, n,
\end{align*}
which even further implies that
\begin{align*}
\frac{\alpha_K}{A_K} Z^K(i) = \left( 1 - \frac{A_{K-1}}{A_{K-1} + \alpha_K} \right) X^K(i) + \alpha_K X^K(i) ((CZ^K)_i - X^K \cdot CZ^K), i =1, \ldots, n,
\end{align*}
which even further implies that
\begin{align*}
\frac{A_{K-1}}{A_{K-1} + \alpha_K} X^K(i) + \frac{\alpha_K}{A_K} Z^K(i) = X^K(i) + \alpha_K X^K(i) ((CZ^K)_i - X^K \cdot CZ^K), i =1, \ldots, n,
\end{align*}
which even further implies that
\begin{align*}
\frac{A_{K-1}}{A_K} X^K(i) + \frac{\alpha_K}{A_K} Z^K(i) = X^K(i) + \alpha_K X^K(i) ((CZ^K)_i - X^K \cdot CZ^K), i =1, \ldots, n,
\end{align*}
which, finally, implies that
\begin{align*}
X^{K+1}(i) \equiv \frac{1}{A_K} \sum_{k=0}^K \alpha_k Z^k(i) = X^K(i) + \alpha_K X^K(i) ((CZ^K)_i - X^K \cdot CZ^K), i =1, \ldots, n.
\end{align*}
Substituting in \eqref{super}, we obtain that
\begin{align*}
(C\bar{Z}^K)_{\max} - \bar{Z}^K \cdot C\bar{Z}^K \leq O(\bar{\alpha}^2) + \bar{\alpha} + \frac{1}{A_K} \ln \left( \frac{\max_iX^0(i)}{ \min_j X^0(j)} \right) - \frac{1}{A_K} \sum_{j=1}^n X^{K+1}(j) \ln \left( X^{K+1}(j) \right)
\end{align*}
and Jensen's inequality implies that
\begin{align*}
(C\bar{Z}^K)_{\max} - \bar{Z}^K \cdot C\bar{Z}^K \leq O(\bar{\alpha}^2) + \bar{\alpha} + \frac{1}{A_K} \ln \left( \frac{\max_iX^0(i)}{ \min_j X^0(j)} \right) + \frac{\ln(n)}{A_K}.
\end{align*}
This inequality implies that as $K \rightarrow \infty$, every limit point of the sequence $\{\bar{Z}^K\}$ is an $(O(\bar{\alpha}^2) + \bar{\alpha})$-approximate Nash equilibrium. Since $\bar{\alpha}$ can be made arbitrarily small, our proof is complete.

\section{The importance of a correct implementation of exponentiation}
\label{NumericalHedge}

A perspective on algorithmic boosting is that it generalizes the operation of taking the long-run average of an orbit. We have insofar focused on symmetric zero-sum games, where the long-run average of iterated Hedge had been known to converge to an approximate Nash equilibrium prior to our results. In this section, we focus on a symmetric bimatrix game that is not zero-sum, wherein the long-run average of iterated Hedge can, in principle, diverge. It is an open question if this is indeed the case. In this section, our main contribution is a numerical phenomenon whereby using the standard implementation of the exponential function in the computation of iterated Hedge, the average diverges whereas using an accurate implementation of the exponential function (in a fashion customized to this particular setting of computing Nash equilibria) the average converges. The example where we document this phenomenon is a great environment for testing iterative algorithms for computing Nash equilibria using the principles of algorithmic boosting.

\subsection{The Shapley game}

To the extent of our knowledge, the first study of the average of iterated Hedge outside the realm of zero-sum environments was by \cite{Daskalakis-SAGT}. \cite[Theorem 1]{Daskalakis-SAGT} shows divergence of the average of iterated Hedge playing against iterated Hedge (they consider a $2$-player setting) in Shapley's $3 \times 3$ symmetric bimatrix game whose payoff matrix is
\begin{align*}
C = \left[ \begin{array}{ccc}
0 & 1 & 2 \\
2 & 0 & 1 \\
1 & 2 & 0 \\
\end{array} \right]
\end{align*}
and whose unique Nash equilibrium is the symmetric Nash equilibrium corresponding to the uniform strategy $(1/3, 1/3, 1/3)$. Their result casts doubt that learning algorithms (used as fixed point iterators) can compute Nash equilibria. In this section (and broadly in this paper), we cast hope.

\subsection{The symmetric Shapley game}

In the rest of this section, we are concerned with the following symmetrization of Shapley's game:
\begin{align*}
C = \left[ \begin{array}{cccccc}
0 & 0 & 0 & 0 & 1 & 2 \\
0 & 0 & 0 & 2 & 0 & 1 \\
0 & 0 & 0 & 1 & 2 & 0 \\
0 & 1 & 2 & 0 & 0 & 0 \\
2 & 0 & 1 & 0 & 0 & 0 \\
1 & 2 & 0 & 0 & 0 & 0 \\
\end{array} \right],
\end{align*}
where the unique Nash equilibrium is the symmetric Nash equilibrium corresponding to the uniform strategy. In our numerical experiment, which is carried out in Matlab, we compare two different implementations of Hedge. The first implementation uses the aforementioned algebraic expression for generating iterates where the exponential function is implemented by Matlab's \texttt{exp(  )} routine. The second implementation is based on a formulation of Hedge as the solution of a convex optimization problem, in particular, as \cite{Krichene} point out
\begin{align*}
T(X) = \argmin \left\{ RE(Y, X) - \alpha Y \cdot CX \big| Y \in \mathbb{X}(C) \right\},
\end{align*}
where $RE(Y, X)$ is the relative entropy distance (Kullback-Leibler divergence) between probability vectors $Y$ and $X$ (as defined earlier). Note that this optimization problem is a {\em relative entropy program} \citep{Chandra} that admits a polynomial-time {\em interior point method} for its exact solution \citep{Nesterov}. In our experiment, we initialize both implementations with the probability vector $(0.1, 0.2, 0.3, 0.2, 0.1, 0.1)$ and use a learning rate equal to $10$. Figure \ref{FOCSHedge} plots the relative entropy distance between the long-run average of the iterates under the two implementations and the Nash equilibrium strategy. It is clear in the figure that the standard implementation gives divergence whereas the robust implementation using relative entropy programming (implemented using Matlab's \texttt{fmincon} interior point solver) gives convergence.

\begin{figure}[tb]
\centering
\includegraphics[width=12cm]{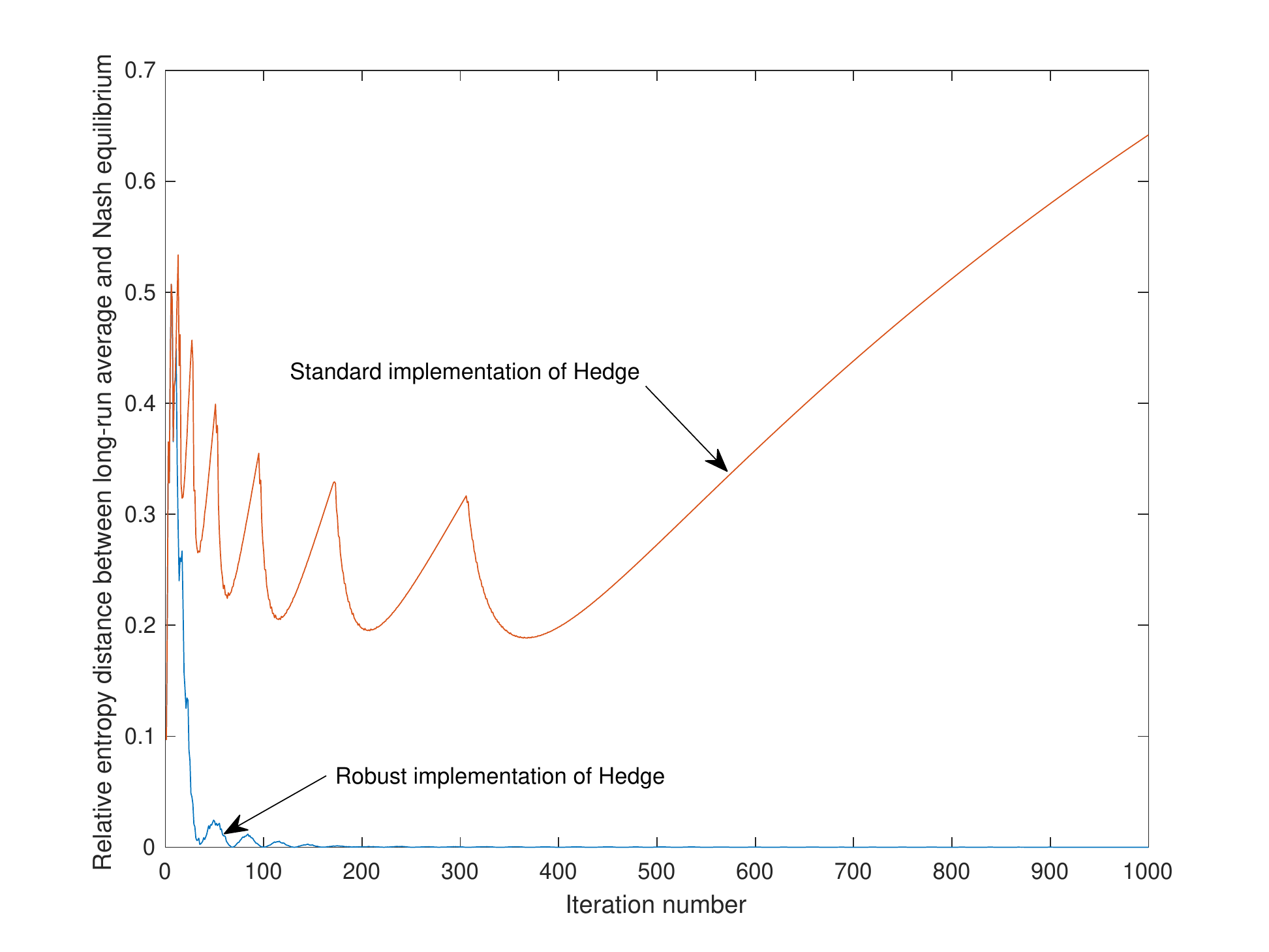}
\caption{\label{FOCSHedge}
Comparison of the standard implementation of Hedge and the robust implementation using relative entropy optimization. The learning rate is equal to $10$ and the initial condition is identical. The figure illustrates the large discrepancy between the two different implementations of the same algorithm.}
\end{figure}

\subsection{Discussion}

In closing this section, we would like to make two key observations. The first is that the long-run average of the replicator dynamic converges in the symmetric Shapley game. This is simple to check using Matlab's standard numerical integrator, namely, \texttt{ode45}. The second observation is that the principle of exponentiating and normalizing (for example, projecting onto the standard simplex) is used in a variety of machine learning tasks and in software code that is deployed in the field. Our experiment clearly demonstrates that it is not unlikely that implementations of the exponential function can trigger behavior different from what is expected or sought for simply because caution has not been paid to the correct implementation of exponentiation. Our hope is our experiment, and broadly this paper, squarely places the importance of correct implementations of exponentiation as a desideratum in field deployments of machine learning and fixed-point computation systems.

\section{A simplified numerically stable Hedge map}
\label{DoubleExponentiation}

In this section, we consider again \eqref{generalization}, which we repeat here for convenience:
\begin{align}
T_i(X | Z) = X(i) \frac{\exp\{\alpha (CZ)_i \}}{\sum_{j=1}^n X(j) \exp\{\alpha (CZ)_j\}}, i =1, \ldots, n.\label{generalization2}
\end{align}
Under this latter map, it is straightforward to show that the analogue of Lemma \ref{convexity_lemma_normalized} is:

\begin{lemma}
\label{convexity_lemma_normalized_2}
Let $C \in \mathbb{\hat{C}}$. Then, for all $Y \in \mathbb{X}(C)$ and for all $X \in \mathbb{\mathring{X}}(C)$, we have that
\begin{align*}
\forall \alpha > 0 : RE(Y, T(X|Z)) \leq RE(Y, X) - \alpha (Y-X) \cdot CZ + \alpha (\exp\{\alpha\} - 1).
\end{align*}
\end{lemma}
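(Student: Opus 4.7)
The plan is to mirror the proof of Lemma \ref{convexity_lemma_normalized} almost verbatim, noting that the only structural difference between the two maps is that the argument $CX$ appearing inside every exponential has been replaced by $CZ$, where $Z$ is treated as a fixed parameter that does not depend on $\alpha$. This means that the chain rule calculations carry over without modification, except that the multiplicative factor $(CX)_j$ that is pulled out of each derivative of $\exp\{\alpha (CX)_j\}$ is replaced by $(CZ)_j$.

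First, I would establish the analogue of Lemma \ref{convexity_lemma}: namely, that $RE(Y, T(X|Z))$ is a convex function of $\alpha$. The computation is identical to the one in Lemma \ref{convexity_lemma} but with $CZ$ in place of $CX$ throughout. The resulting first derivative is
\begin{align*}
\frac{d}{d\alpha} RE(Y, T(X|Z)) = \frac{\sum_{j=1}^n X(j) (CZ)_j \exp\{\alpha (CZ)_j\}}{\sum_{j=1}^n X(j) \exp\{\alpha (CZ)_j\}} - Y \cdot CZ,
\end{align*}
and the second derivative is nonnegative by the same Jensen-type argument used before. Convexity in $\alpha$ in hand, I apply the secant inequality exactly as in \eqref{ooone} to get
\begin{align*}
RE(Y, T(X|Z)) - RE(Y, X) \leq \alpha \cdot \frac{d}{d\alpha} RE(Y, T(X|Z)).
\end{align*}

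Next, I bound the derivative by mimicking the chain of inequalities in Lemma \ref{convexity_lemma_normalized}. Applying Jensen to the denominator gives $\sum_j X(j) \exp\{\alpha (CZ)_j\} \geq \exp\{\alpha X \cdot CZ\}$, and then invoking $\exp\{\alpha x\} \leq 1 + (\exp\{\alpha\} - 1) x$ for $x \in [0,1]$ on each $(CZ)_j$ (valid since $C \in \hat{\mathbb{C}}$) yields
\begin{align*}
\frac{d}{d\alpha} RE(Y, T(X|Z)) \leq \frac{X \cdot CZ}{\exp\{\alpha X \cdot CZ\}} - Y \cdot CZ + (\exp\{\alpha\} - 1) \frac{\sum_{j=1}^n X(j) (CZ)_j^2}{\exp\{\alpha X \cdot CZ\}}.
\end{align*}
Using $\exp\{\alpha X \cdot CZ\} \geq 1$ (again from $C \in \hat{\mathbb{C}}$) and bounding $\sum_j X(j) (CZ)_j^2 \leq \sum_j X(j) (CZ)_j \leq 1$ as in the proof of Lemma \ref{convexity_lemma_normalized}, I conclude that
\begin{align*}
\frac{d}{d\alpha} RE(Y, T(X|Z)) \leq (X - Y) \cdot CZ + (\exp\{\alpha\} - 1).
\end{align*}
Multiplying by $\alpha$ and combining with the secant inequality delivers the stated bound.

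There is essentially no genuine obstacle here; the proof is a cosmetic adaptation in which $Z$ plays the role that $X$ plays in Lemma \ref{convexity_lemma_normalized}. The only subtle point worth flagging is that the normalizing weights inside the softmax remain $X(j)$ (not $Z(j)$), which is precisely what makes the bound $\sum_j X(j)(CZ)_j^2 \leq \sum_j X(j)(CZ)_j$ go through verbatim, since $X$ is still a probability vector and each $(CZ)_j \in [0,1]$ by $C \in \hat{\mathbb{C}}$.
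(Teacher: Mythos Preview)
Your proposal is correct and matches the paper's intent exactly: the paper does not spell out a proof of this lemma, merely stating that it is the straightforward analogue of Lemma~\ref{convexity_lemma_normalized}, and your line-by-line substitution of $CZ$ for $CX$ (while keeping the weights $X(j)$) is precisely that analogue. Your remark about the normalizing weights remaining $X(j)$ is the one point that could trip someone up, and you handle it correctly.
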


Using Lemma \ref{convexity_lemma_normalized_2} the analogue of Lemma \ref{approximation_lemma_corollary} is:

\begin{lemma}
\label{approximation_lemma_corollary_2}
Let $C \in \mathbb{\hat{C}}$, $\{ X^k \}$ be the sequence of iterates obtained by iteratively applying \eqref{generalization2} using a sequence $\{Z^k\}$ of multipliers, $X^0 \in \mathbb{\mathring{X}}(C)$, and assume $\alpha > 0$ is held constant. Then,
\begin{align}
\forall \theta > 0 \mbox{ } \forall Y \in \mathbb{X}(C) : \frac{1}{K+1} \sum_{k=0}^K (Y - X^k) \cdot CZ^k \leq (\exp\{\alpha\}-1) + \theta\label{limeisgood2}
\end{align}
where $K \geq \floor*{RE(Y, X^0)/(\alpha \theta)}$. If $X^0$ is the uniform strategy, then \eqref{limeisgood2} holds after $\floor*{\ln(n) / (\alpha \theta)}$ iterations and continues to hold thereafter.
\end{lemma}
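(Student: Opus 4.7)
The plan is to mirror the proof of Lemma \ref{approximation_lemma_corollary} essentially verbatim, with the sole modifications being that every invocation of Lemma \ref{convexity_lemma_normalized} is replaced by Lemma \ref{convexity_lemma_normalized_2}, and every inner product of the form $(Y - X^k)\cdot CX^k$ is replaced by $(Y - X^k)\cdot CZ^k$. The structural reason this works is that the telescoping argument never used any specific relationship between the updating direction and the iterate $X^k$; it only used that the one-step change in relative entropy is controlled by a linear form in $Y - X^k$ plus an $O(\alpha^2)$ slack term. In Lemma \ref{convexity_lemma_normalized_2}, the linear form is exactly of this shape with $CZ^k$ playing the role previously played by $CX^k$, so the same telescoping goes through.

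Concretely, I would argue by contradiction. Assume that
\begin{align*}
\frac{1}{K+1}\sum_{k=0}^K (Y-X^k)\cdot CZ^k > (\exp\{\alpha\}-1) + \theta.
\end{align*}
Applying Lemma \ref{convexity_lemma_normalized_2} with iterate $X^k$ and multiplier $Z^k$ yields
\begin{align*}
RE(Y, X^{k+1}) \leq RE(Y, X^k) - \alpha (Y-X^k)\cdot CZ^k + \alpha(\exp\{\alpha\}-1).
\end{align*}
Summing from $k=0$ to $k=K$, dividing by $K+1$, and using the nonnegativity of $RE(Y, X^{K+1})$ gives
\begin{align*}
0 \leq \frac{RE(Y,X^0)}{K+1} - \frac{\alpha}{K+1}\sum_{k=0}^K (Y-X^k)\cdot CZ^k + \alpha(\exp\{\alpha\}-1).
\end{align*}
Substituting the contradiction hypothesis yields $RE(Y,X^0) \geq (K+1)\alpha\theta$, which contradicts $K \geq \lfloor RE(Y,X^0)/(\alpha\theta)\rfloor$, completing the first part.

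For the second part, I would invoke the same observation used in Lemma \ref{approximation_lemma_corollary}: when $X^0$ is uniform, $RE(Y, X^0)$ is a convex function of $Y$ on $\mathbb{X}(C)$, so it is maximized at a vertex $E_i$, where it attains the value $\ln(n)$. Hence $RE(Y,X^0)/(\alpha\theta) \leq \ln(n)/(\alpha\theta)$ uniformly in $Y$, so the bound from the first part holds once $K \geq \lfloor \ln(n)/(\alpha\theta)\rfloor$ and continues to hold for all larger $K$ since the right-hand side of \eqref{limeisgood2} is independent of $K$ while the argument above applies verbatim to any subsequent horizon.

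The main (and essentially only) obstacle is verifying that Lemma \ref{convexity_lemma_normalized_2} is genuinely correct for the $Z$-parametrized map, i.e.\ that one can repeat the derivation behind Lemma \ref{convexity_lemma_normalized} treating $CZ$ as a fixed vector (independent of $\alpha$) in place of $CX$. Since in \eqref{generalization2} the vector $CZ$ does not depend on $\alpha$ whereas $X$ is held fixed across the differentiation in $\alpha$, the convexity-in-$\alpha$ and secant-inequality calculations carry over with the only change being that $(CX)_j$ is replaced by $(CZ)_j$ and that the bound $\sum_j X(j)(CZ)_j^2 \leq 1$ still follows from $C \in \hat{\mathbb{C}}$ and $Z \in \mathbb{X}(C)$. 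Once this is granted, the rest of the argument is purely bookkeeping.
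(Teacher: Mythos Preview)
Your proposal is correct and matches the paper's approach exactly: the paper does not even spell out a separate proof for this lemma, merely indicating that it is the analogue of Lemma~\ref{approximation_lemma_corollary} obtained by using Lemma~\ref{convexity_lemma_normalized_2} in place of Lemma~\ref{convexity_lemma_normalized}. Your write-up faithfully fills in those details, including the contradiction argument, the telescoping sum, and the uniform-start observation for the second part.
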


We have the following theorem:

\begin{theorem}
\label{fptas_theorem_2}
Let $C \in \mathbb{\hat{C}}$ be such that it has been obtained by an affine transformation on a antisymmetric matrix. Then starting at the uniform strategy, the average of multipliers of iterated \eqref{generalization2} assuming
\begin{align*}
Z^k \in \arg\min \{ X^k \cdot C\mathcal{Y} | \mathcal{Y} \in \mathbb{X}(C) \}
\end{align*}
converges to an $\epsilon$-approximate symmetric Nash equilibrium strategy in at most
\begin{align*}
\floor*{\frac{\ln(n)}{\frac{\epsilon}{2}\ln\left(1+ \frac{\epsilon}{2} \right)}}
\end{align*} 
iterations using a fixed learning rate equal to $\ln(1+\epsilon/2)$.
\end{theorem}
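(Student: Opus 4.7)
The plan is to follow the proof of Theorem \ref{fptas_theorem} almost step-for-step, substituting Lemma \ref{approximation_lemma_corollary_2} for Lemma \ref{approximation_lemma_corollary}, with one additional ingredient needed to handle the fact that now the iterates $X^k$ and the multipliers $Z^k$ occupy different slots of the bilinear form.

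First, I would apply Lemma \ref{approximation_lemma_corollary_2} starting from the uniform strategy $X^0$, with $\theta = \epsilon/2$ and $\alpha = \ln(1+\epsilon/2)$. Since $\exp\{\ln(1+\epsilon/2)\} - 1 = \epsilon/2$, after at most $\floor*{\ln(n)/(\alpha\theta)} = \floor*{\ln(n) / (\tfrac{\epsilon}{2}\ln(1+\tfrac{\epsilon}{2}))}$ iterations the lemma furnishes
\[
\forall Y \in \mathbb{X}(C) : \frac{1}{K+1} \sum_{k=0}^K (Y - X^k) \cdot CZ^k \;\leq\; \epsilon.
\]
Letting $\bar Z^K \doteq \tfrac{1}{K+1}\sum_{k=0}^K Z^k$, this rearranges to
\[
\forall Y \in \mathbb{X}(C) : Y \cdot C\bar Z^K \;\leq\; \epsilon + \frac{1}{K+1}\sum_{k=0}^K X^k \cdot CZ^k.
\]

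Second, I would invoke the defining property of the multipliers, namely that $Z^k$ minimizes $X^k \cdot C\mathcal{Y}$ over $\mathcal{Y} \in \mathbb{X}(C)$; plugging in $\mathcal{Y} = X^k$ gives $X^k \cdot CZ^k \leq X^k \cdot CX^k$. Third, I would use the hypothesis that $C$ is an affine transformation of an antisymmetric matrix: this means there is a constant $\kappa$ such that $X \cdot CX = \kappa$ for every $X \in \mathbb{X}(C)$, so in particular $X^k \cdot CX^k = \kappa = \bar Z^K \cdot C\bar Z^K$. Combining these two facts bounds $\tfrac{1}{K+1}\sum X^k \cdot CZ^k$ by $\bar Z^K \cdot C\bar Z^K$. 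Taking the maximum over $Y$ then yields
\[
(C\bar Z^K)_{\max} - \bar Z^K \cdot C\bar Z^K \;\leq\; \epsilon,
\]
which is precisely the definition of an $\epsilon$-approximate symmetric Nash equilibrium strategy.

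The only non-routine step is the passage from $\tfrac{1}{K+1}\sum X^k \cdot CZ^k$ to the quadratic form $\bar Z^K \cdot C\bar Z^K$ evaluated on the averaged multiplier, and this is precisely where both the clairvoyant best-response choice of $Z^k$ and the affine-antisymmetric hypothesis are needed together. The best-response choice collapses $X^k \cdot CZ^k$ down to (at most) the diagonal $X^k \cdot CX^k$, and the affine-antisymmetric structure makes that diagonal constant over the simplex and hence equal to $\bar Z^K \cdot C\bar Z^K$. Once this observation is in hand, the remainder of the argument is bookkeeping identical to the proof of Theorem \ref{fptas_theorem}.
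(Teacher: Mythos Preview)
Your proposal is correct and follows essentially the same route as the paper's proof: invoke Lemma \ref{approximation_lemma_corollary_2}, take the maximum over $Y$, bound $\tfrac{1}{K+1}\sum_k X^k\cdot CZ^k$ using the best-response property of $Z^k$, and then use the affine-antisymmetric hypothesis to replace the resulting quadratic form by $\bar Z^K\cdot C\bar Z^K$. The only cosmetic difference is that the paper passes through the intermediate inequality $\tfrac{1}{K+1}\sum_k \min_{\mathcal Y} X^k\cdot C\mathcal Y \le \min_{\mathcal Y}\bar X^K\cdot C\mathcal Y \le \bar X^K\cdot C\bar X^K$ before invoking the constant-diagonal property, whereas you plug $\mathcal Y=X^k$ directly into the minimization to get $X^k\cdot CZ^k\le X^k\cdot CX^k=\kappa$ termwise; your shortcut is slightly cleaner but the two arguments are equivalent.
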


\begin{proof}
This theorem is a simple implication of Lemma \ref{approximation_lemma_corollary_2}. Note that since $Y$ is arbitrary in \eqref{limeisgood2}, we may write it as
\begin{align*}
\max_{i=1}^n \left\{ C \left( \frac{1}{K+1} \sum_{k=0}^K Z^k \right) \right\} - \frac{1}{K+1} \sum_{k=0}^K X^k \cdot CZ^k \leq (\exp\{\alpha\} - 1) + \theta.
\end{align*}
Note now that
\begin{align*}
\frac{1}{K+1} \sum_{k=0}^K X^k \cdot CZ^k &= \frac{1}{K+1} \sum_{k=0}^K \min \left\{ X^k \cdot C\mathcal{Y} | \mathcal{Y} \in \mathbb{X}(C) \right\}\\
&\leq \min \left\{ \left( \frac{1}{K+1} \sum_{k=0}^K X^k \right) \cdot C\mathcal{Y} | \mathcal{Y} \in \mathbb{X}(C) \right\}\\
&\leq \left( \frac{1}{K+1} \sum_{k=0}^K X^k \right) \cdot C\left( \frac{1}{K+1} \sum_{k=0}^K X^k \right)
\end{align*}
and using the notations
\begin{align*}
\bar{X}^K \equiv \frac{1}{K+1} \sum_{k=0}^K X^k
\end{align*}
and
\begin{align*}
\bar{Z}^K \equiv \frac{1}{K+1} \sum_{k=0}^K Z^k
\end{align*}
and using also the assumption that $C$ has been obtained by an affine transformation on a antisymmetric matrix, which implies that
\begin{align*}
\left( \frac{1}{K+1} \sum_{k=0}^K X^k \right) \cdot C\left( \frac{1}{K+1} \sum_{k=0}^K X^k \right) = \bar{X}^K \cdot C\bar{X}^K = \bar{Z}^K \cdot C\bar{Z}^K
\end{align*}
we obtain that
\begin{align*}
(C\bar{Z}^K)_{\max} - \bar{Z}^K \cdot C\bar{Z}^K \leq (\exp\{\alpha\} - 1) + \theta.
\end{align*}
Letting $\theta = \epsilon/2$ and $\alpha = \ln(1+\epsilon/2)$ and applying Lemma \ref{approximation_lemma_corollary_2}, we obtain the theorem.
\end{proof}

The advantage of the extra optimization (which we show to be nearly effortless) is in numerical stability: Let $\mathring{C}$ be the antisymmetric matrix which $C$ has been obtained by an affine transformation from and note that
\begin{align*}
\min \{ X^k \cdot C\mathcal{Y} | \mathcal{Y} \in \mathbb{X}(C) \} = \min \{ X^k \cdot \mathring{C}\mathcal{Y} | \mathcal{Y} \in \mathbb{X}(C) \}.
\end{align*}
Note further that since $\mathring{C}$ is antisymmetric, we have that
\begin{align*}
\min \{ X^k \cdot \mathring{C}\mathcal{Y} | \mathcal{Y} \in \mathbb{X}(C) \} = - \max \{ \mathcal{Y} \cdot \mathring{C} X^k | \mathcal{Y} \in \mathbb{X}(C) \}.
\end{align*}
The latter optimization problem can be solved by looking up a pure strategy that maximizes $(\mathring{C}X^k)_i, i = 1,\ldots,n$ and, therefore, an optimizer can always be chosen to be a pure strategy. Therefore, we obtain that the multiplier $CZ^k, k = 0, 1, \ldots$ is always a column of matrix $C$, and, since there are $n$ columns, the exponentials can be computed ahead of the execution of the algorithm to any desired precision. The gain in numerical stability is obtained by looking up these precomputed accurate exponentials en route to the approximate Nash equilibrium our dynamic converges to.

\section{Discussion: Dominant eigenvectors and Nash equilibria}
\label{Pagerank-Nash}

Given a probability vector $X$, let $\Delta(X)$ denote the diagonal matrix whose diagonal elements are the elements of $X$. The following theorem relates dominant eigenvectors and Nash equilibria:

\begin{theorem}
Given a symmetric bimatrix game $(C, C^T)$ whose payoff matrix $C$ is nonnegative, $X^* \in \mathbb{X}(C)$ is a fixed point of the replicator dynamic (and, thus, of the Hedge map) if and only if there exists $\lambda > 0$ such that
\begin{align*}
\Delta(X^*) C X^* = \lambda X^*.
\end{align*}
Therefore, Nash equilibria are fixed points of this equation.
\end{theorem}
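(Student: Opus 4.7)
The plan is to directly unpack the fixed-point condition for the replicator dynamic, identify the Rayleigh-like scalar $X^{*} \cdot CX^{*}$ as the candidate eigenvalue $\lambda$, and then show equivalence by componentwise case analysis on the carrier $\mathcal{C}(X^{*})$. Throughout, nonnegativity of $C$ is used only to keep the scalar $\lambda$ nonnegative and to argue positivity in the nondegenerate case.

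First I would write out $X^{*}$ being a fixed point of the replicator dynamic as the condition $X^{*}(i)((CX^{*})_i - X^{*} \cdot CX^{*}) = 0$ for $i = 1, \ldots, n$. This splits into two cases: either $X^{*}(i) = 0$ (so $i \notin \mathcal{C}(X^{*})$), or $(CX^{*})_i = X^{*} \cdot CX^{*}$. Setting $\lambda \doteq X^{*} \cdot CX^{*}$, I would then verify directly that $\Delta(X^{*}) C X^{*} = \lambda X^{*}$: for $i \in \mathcal{C}(X^{*})$ the $i$th component equals $X^{*}(i)(CX^{*})_i = X^{*}(i)\lambda$, and for $i \notin \mathcal{C}(X^{*})$ both sides vanish since $X^{*}(i) = 0$.

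For the reverse direction, assume $\Delta(X^{*})CX^{*} = \lambda X^{*}$ with $\lambda > 0$. Componentwise this reads $X^{*}(i)(CX^{*})_i = \lambda X^{*}(i)$, so for every $i \in \mathcal{C}(X^{*})$ we obtain $(CX^{*})_i = \lambda$. Taking the inner product with $X^{*}$ then forces $X^{*} \cdot CX^{*} = \lambda$, and substituting back yields $X^{*}(i)((CX^{*})_i - X^{*} \cdot CX^{*}) = 0$ for every $i$, which is precisely the fixed-point condition of the replicator dynamic. The final sentence about Hedge then follows by an essentially identical argument applied to the fixed-point condition $\exp\{\alpha(CX^{*})_i\} = \sum_j X^{*}(j)\exp\{\alpha (CX^{*})_j\}$ for $i \in \mathcal{C}(X^{*})$, which likewise reduces to $(CX^{*})_i$ being constant on the carrier and equal to $X^{*} \cdot CX^{*}$.

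The main subtlety is the sign restriction $\lambda > 0$. The forward construction gives $\lambda = X^{*} \cdot CX^{*} \geq 0$ because $C$ is nonnegative and $X^{*}$ is a probability vector, but it can equal zero (for instance, when the support of $X^{*}$ lies in a block of zeros of $C$). I would handle this by stating the equivalence for any Nash equilibrium at which the equilibrium payoff $X^{*} \cdot CX^{*}$ is strictly positive, which is the generic situation and the one the final sentence of the theorem is really about; in the degenerate case one simply has the eigenvalue $\lambda = 0$, and the eigenvector relation still holds. No other obstacles are anticipated, since all steps are straightforward algebraic manipulations once the fixed-point condition has been decomposed along the carrier.
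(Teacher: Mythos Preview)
Your argument is correct and is essentially the same as the paper's, just spelled out in more detail: the paper's proof is a two-line sketch saying the forward direction is ``obvious'' and the reverse is obtained by dividing through by the positive entries of $X^*$, which is exactly your componentwise case analysis on $\mathcal{C}(X^*)$. You are also more careful than the paper in flagging that the forward direction only yields $\lambda = X^* \cdot CX^* \geq 0$ rather than $\lambda > 0$, a degeneracy the paper does not address.
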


\begin{proof}
The forward direction is obvious and the reverse direction is obtained by left-multiplying with the inverse of the positive elements of $X^*$ in each vector position.
\end{proof}

In the previous theorem, it becomes clear that Nash equilibria are eigenvectors of a nonlinear operator. This formulation begs the question what the economic interpretation of the dominant eigenvector (PageRank) of a nonnegative matrix might be. It is certainly meaningful to ponder its fundamental property: Such dominant vector corresponds to a strategy whereby its payoff vector has the same ranking as the strategy itself in the sense that pure strategies that have a higher probability of being used receive a higher payoff. We leave the question of understanding the economic content of this phenomenon as future work. On the flip side it is also interesting to ponder what the Nash equilibria of the Google matrix correspond to in link analysis. In closing this discussion, we note that the formulation of Nash equilibria in the previous theorem as fixed points of a nonlinear operator yields an iterative algorithm intended to compute a Nash equilibrium as
\begin{align*}
\Delta(X^k) C X^{k+1} = \lambda_{k+1} X^{k+1}, k = 0,1,\ldots
\end{align*}
which can be solved by a dominant eigenvector computation. We have empirically checked that this process converges to the Nash equilibrium of rock-paper-scissors starting from an interior initialization, but it may fail to converge in some randomly generated payoff matrices. We leave it as a question for future work how to apply algorithmic boosting to this iterative process.

\section{Future work}
\label{Conclusion}

Let us further single out a pair of questions for the future: (1) The first is to develop techniques for computing Nash equilibria (whether in zero-sum games or in the general case) using Pad\'e approximation theory: We have previously discussed how Pad\'e approximation theory can be used to compute the matrix exponential. This suggests a general method to compute a Nash equilibrium, namely, to prove that an algorithmically powered version of a dynamic converges to a Nash equilibrium and then to approximate that Nash equilibrium using Pad\'e approximation theory. Such an idea should render a polynomial-time algorithm to compute a Nash equilibrium under an arbitrary payoff matrix. (2) The second is to develop techniques for solving the system of equations
\begin{align*}
\dot{X}(i) &=  X(i) ((CZ)_i - X \cdot CZ), \quad i=1, \ldots, n\\
X &= \frac{1}{t} \int_{0}^t Z_\alpha d\alpha.
\end{align*}
We conjecture that this system has a solution that converges to a symmetric Nash equilibrium of $(C, C^T)$. In fact, our clairvoyant averaging algorithm is one possible discretization of this equation.\\

\section*{Acknowledgments}

The idea that Hedge is a discretization of the replicator dynamic is by Professor Yannis Kevrekidis. The first author would like to thank him for this contribution and other helpful discussions. We also thank Professor Avi Wigderson for a helpful discussion and for bringing some references to our attention.

\bibliographystyle{abbrvnat}
\bibliography{real}

\end{document}